\documentclass[10pt, conference,onecolumn]{IEEEtran}  

\usepackage{times}
\usepackage{hyperref} 
\usepackage{amsmath}
\usepackage{amssymb}
\usepackage{amsthm}
\usepackage{bbm}
\usepackage{dsfont}

\usepackage{epstopdf} 
\usepackage{graphicx}
\usepackage{url}

\usepackage[normalem]{ulem}

\usepackage{color}

\DeclareMathOperator*{\argmin}{arg\,min}

\newtheorem{theorem}{Theorem}
\newtheorem{lemma}{Lemma}

\theoremstyle{definition}
\newtheorem{definition}{Definition}
\theoremstyle{remark}
\newtheorem{remark}{Remark}

\newcommand{\mynewline}{\mbox{}\\}

\newcommand{\ones}{{\textbf 1}}

\newcommand{\BRA}[1]{\left( #1 \right)}

\newcommand{\BRAs}[1]{\left\{ #1 \right \}}

\newcommand{\cA}{{\mathcal A}}
\newcommand{\cX}{{\mathcal X}}
\newcommand{\cY}{{\mathcal Y}}

\newcommand{\cP}{{\mathcal P}}

\newcommand{\bx}{\textbf{x}}
\newcommand{\by}{\textbf{y}}
\newcommand{\bz}{\textbf{z}}

\newcommand{\ie}{{\emph{i.e.}}}
\newcommand{\eg}{{\emph{e.g.}}}

\newcommand{\etal}{{\emph{et al.}}}

\newcommand{\Ind}[1]{ \mathds{1}_{\BRAs{#1}} }

\newcommand{\MIN}[1]{ \smash{\displaystyle\min_{#1}} }



\newif\ifFullProofs
\FullProofstrue 

\newif\ifTwoCols
\TwoColsfalse 

\newcommand{\ColVec}[2]{\BRA{ \begin{array}{c}  #1 \\#2 \\ \end{array}    }}
\newcommand{\TripleColVec}[3]{\BRA{ \begin{array}{c}  #1 \\#2 \\#3\\ \end{array}    }}

\newcommand{\LPMat}{\BRA{ \begin{array}{ccc} A & 0 & C\\ I & I & 0\\ \end{array}}}
\newcommand{\Wyx} {W_{Y|X}\BRA{\by|\bx}}
\newcommand{\Tx} {T_{\bx}}
\newcommand{\Ty} {T_{\by}}
\newcommand{\bzy} {\bz_{\by}}

\begin{document}

\title{On the calculation of the minimax-converse of the channel coding problem}

\author{Nir~Elkayam ~~~~~~~Meir~Feder \\
        Department of Electrical Engineering - Systems\\
        Tel-Aviv University, Israel \\
        Email: nirelkayam@post.tau.ac.il, meir@eng.tau.ac.il}

\maketitle

\subsection*{\centering Abstract}
\textit{
A minimax-converse has been suggested for the general channel coding problem \cite{polyanskiy2010channel}. This converse comes in two flavors. The first flavor is generally used for the analysis of the coding problem with non-vanishing error probability and provides an upper bound on the rate given the error probability. The second flavor fixes the rate and provides a lower bound on the error probability. Both converses are given as a min-max optimization problem of an appropriate binary hypothesis testing problem. The properties of the first converse were studies in \cite{polyanskiy2013saddle} and a saddle point was proved. In this paper we study the properties of the second form and prove that it also admits a saddle point. Moreover, an algorithm for the computation of the saddle point, and hence the bound, is developed. In the DMC case, the algorithm runs in a polynomial time.}

\section{Introduction}

Achievable and Converse bounds were derived in \cite{ElkayamITW2015} for the problem of point to point (P2P) channel coding by using the standard \textbf{random coding} argument. The setting considered a general channel and a general (possibly mismatched) decoding metric. Both achievable and converse results were given in terms of a function $F(R)$, which is the cumulative distribution function (CDF) of the pairwise error probability. When the decoding metric is matched to the channel (which is the focus of this paper), the converse bound reduces to the \textbf{minimax converse}, proposed in \cite{polyanskiy2010channel}.

Consider an abstract channel coding problem; that is a random transformation defined by a pair of measurable spaces of inputs $\cX$ and outputs $\cY$ and a conditional probability measure $W_{Y|X}: \cX \mapsto \cY$. Let $M$ be a positive integer. A flavor of the minimax converse is a lower bound on the error probability of any code with $M=2^R$ codewords. The proof of the minimax converse relies on a reduction from the channel coding problem to the binary hypothesis testing problem. The bound is given in terms of $\beta_{\alpha}\BRA{P,Q}$, which is the power of the test (i.e. type II error probability) at a significance level $1-\alpha$ (i.e., type I error probability), to discriminate between probability measures $P$ and $Q$.


Specifically, the minimax converse comes in the following two flavors:
  \begin{equation}\label{MetaConverse:Intro:1}
    \epsilon \geq \inf_{Q_X}\sup_{Q_Y} \beta_{1-\frac{1}{M}}\BRA{Q_X \times Q_Y, Q_X W_{Y|X}}
  \end{equation}
  \begin{equation}\label{MetaConverse:Intro:2}
    \frac{1}{M} \geq \inf_{Q_X}\sup_{Q_Y} \beta_{1-\epsilon}\BRA{Q_X W_{Y|X}, Q_X \times Q_Y}.
  \end{equation}
where $Q_X W_{Y|X}$ and $Q_X \times Q_Y$ are the joint distributions on $\cX\times\cY$ defined by\footnote{throughout the paper, we assume that the alphabets $\cX$ and $\cY$ are finite or countably infinite.}:
\begin{align*}
  \BRA{Q_X W_{Y|X}}(\bx,\by)&=Q_X(\bx) W_{Y|X}(\by|\bx) \\
  \BRA{Q_X \times Q_Y}(\bx,\by)&=Q_X(\bx) Q_Y(\by)
\end{align*}
The first form \eqref{MetaConverse:Intro:1} gives a lower bound on the error probability of any code given that the number of codewords is $M$. The second form \eqref{MetaConverse:Intro:2} gives an upper bound on the number of codewords $M$ given that the error probability is $\epsilon$. Both bounds are given as a $\inf-\sup$ optimization problem on the set of \emph{input distributions} $Q_X$ and \emph{output distributions} $Q_Y$.

The functional properties of $\beta_{1-\epsilon}\BRA{Q_X W_{Y|X}, Q_X \times Q_Y}$, as a function of $Q_X$ and $Q_Y$ (\ie, the objective function in \eqref{MetaConverse:Intro:2}) were investigated in \cite{polyanskiy2013saddle}. In particular, the function is convex-concave and the existence of a \textbf{saddle point} was proved under general conditions. The focus of this paper is on the form \eqref{MetaConverse:Intro:1}, as this form has been used in \cite{ElkayamITW2015} for the converse and achievable results there.

Specifically, our goal in this paper is to develop tools to evaluate the optimization problem \eqref{MetaConverse:Intro:1}, and the distributions $Q_X$ and $Q_Y$ that attain it. In particular, by calculating the optimal distribution $Q_X$ in \eqref{MetaConverse:Intro:1} for a given $R=\log M$, we obtain both a converse bound and a ``good'' distribution for random coding at rate $R$, whose performance are close up to a factor to the converse result, see \cite[Theorem 4]{ElkayamITW2015} for the exact statement.

The paper is structured as
follows:
\begin{itemize}
  \item In section \ref{Sec:BinaryHypLemma} we derive a general variational formula for the functional $\beta_{\alpha}$. The formula is interesting by its own right (see further \cite{ElkayamBinaryTest}), but in this paper we are interested only in its usage for analyzing the minimax converse.
  \item In section \ref{Sec:ApplicationToMetaConverse} we apply the variational formula on the functional: $$ \beta_{1-e^{-R}}\BRA{Q_X \times Q_Y, Q_X W_{Y|X}}.$$
      This gives us a hint for defining a new functional $\gamma$ with a larger domain than $\beta$. This new functional is convex-concave, thus has a saddle point, which in turn implies a saddle point of \eqref{MetaConverse:Intro:1}. Moreover, necessary and sufficient conditions for the saddle point are proved.
  \item In section \ref{Sec:HighLevelDescription} we provide a high level description of an algorithm for computing the saddle point of $\gamma$. Following that we provide in section \ref{Sec:AlgorithmDetail} a more detailed description of the algorithm, showing how it builds a sequence of input distributions $Q_X^{(k)}$ using linear programs designed to reduce the score $ \sup_{Q_Y}\beta_{1-e^{-R}}\BRA{Q_X^{(k)} \times Q_Y, Q_X^{(k)} W_{Y|X}}$.
      \end{itemize}

  \ifFullProofs
In the appendix \ref{App:DMC} we describe the modification needed for the calculation of the minimax-converse for Discrete Memoryless Channels (DMC) where symmetries can be used to significantly reduce the computational burden into a polynomial time algorithm (as a function of the block length) for a fixed (small) $|\cX|,|\cY|$ input and output alphabet.
  \else
In the full paper \cite{ElkayamF15OnTheCalc}, we also describe the modification needed for calculating the minimax-converse for Discrete Memoryless Channels (DMC's) where symmetries are used to significantly reduce the computational cost into a polynomial time algorithm (as a function of the block length) for a fixed (small) $|\cX|,|\cY|$ input and output alphabet.
  \fi

\section{General Binary Hypothesis testing}\label{Sec:BinaryHypLemma}

Recall some general (and standard) definitions about the optimal performance of a binary hypothesis testing between two probability measures $P$ and $Q$ over a set $W$:
\begin{equation}\label{binary_hypothsis:beta}
  \beta_{\alpha}\BRA{P,Q} = \MIN{\substack{P_{Z|W} :\\ \sum_{w\in W}P(w)P_{Z|W}(1|w) \geq \alpha} } \sum_{w\in W}Q(w)P_{Z|W}(1|w),
\end{equation}
where $P_{Z|W}:W \rightarrow \BRAs{0,1}$ is any randomized test. The minimum is guaranteed to be achieved by the
Neyman--Pearson lemma. Thus, $\beta_{\alpha}\BRA{P,Q}$ gives the minimum probability of error under hypothesis $Q$ if the probability of error under hypothesis $P$ is not larger than $1-\alpha$. $\beta$ is the \textbf{power} of the test at \textbf{significance level} $1-\alpha$.


\begin{lemma}\label{Lemma:BinaryHyp}
The following variational formula holds:
\begin{equation}\label{Formula:Beta:Sup}
  \beta_{\alpha}\BRA{P,Q} = \max_{\lambda}\BRA{\sum_{w\in W} \min\BRA{Q(w),\lambda P(w)} - \lambda\BRA{1-\alpha}}.
\end{equation}
Moreover,
\begin{equation}\label{Formula:Beta}
  \beta_{\alpha}\BRA{P,Q} = \sum_{w\in W} \min\BRA{Q(w),\lambda P(w)} - \lambda\BRA{1-\alpha}
\end{equation}
If and only if:
\begin{equation}\label{Formula:OptimalLambda}
  P\BRAs{w:\frac{Q(w)}{P(w)} < \lambda} \leq \alpha \leq P\BRAs{w:\frac{Q(w)}{P(w)} \leq \lambda}
\end{equation}
\end{lemma}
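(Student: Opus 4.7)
The plan is to recognize that $\beta_{\alpha}\BRA{P,Q}$ as defined in \eqref{binary_hypothsis:beta} is a linear program in the randomized test $t(w):=P_{Z|W}(1|w)\in[0,1]$, and to derive \eqref{Formula:Beta:Sup} via Lagrangian duality, with $\lambda\ge 0$ being the multiplier attached to the significance constraint $\sum_w P(w)t(w)\ge\alpha$. Both \eqref{Formula:Beta:Sup} and the characterization \eqref{Formula:OptimalLambda} will then fall out of strong LP duality together with complementary slackness, exactly the content of the Neyman--Pearson lemma.

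First I form the Lagrangian
$$L(t,\lambda) = \sum_{w\in W} Q(w) t(w) - \lambda\BRA{\sum_{w\in W} P(w) t(w) - \alpha} = \sum_{w\in W} \BRA{Q(w)-\lambda P(w)} t(w) + \lambda\alpha,$$
and minimize pointwise over $t(w)\in[0,1]$. Each summand is linear in $t(w)$, so the minimizer takes $t(w)=1$ on $\BRAs{Q<\lambda P}$ and $t(w)=0$ on $\BRAs{Q>\lambda P}$, yielding
$$\min_{t\in[0,1]^W} L(t,\lambda) = \sum_{w\in W} \min\BRA{0,Q(w)-\lambda P(w)} + \lambda\alpha = \sum_{w\in W} \min\BRA{Q(w),\lambda P(w)} - \lambda\BRA{1-\alpha},$$
after using the identity $\min(0,a-b)=\min(a,b)-b$ together with $\sum_w P(w)=1$. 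Weak duality therefore establishes the $\ge$ direction in \eqref{Formula:Beta:Sup}; extending the maximization to all of $\mathbb{R}$ is harmless, since for $\alpha\le 1$ the dual objective is concave in $\lambda$ and non-increasing on $(-\infty,0]$.

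To achieve equality and to prove the ``if'' half of the iff-statement, I exhibit a primal test certifying optimality. Given any $\lambda$ satisfying \eqref{Formula:OptimalLambda}, the condition says precisely that $\alpha$ lies between the $P$-mass strictly below and at the threshold $\lambda$, so there exists $t^{\ast}(w)\in[0,1]$ equal to $1$ on $\BRAs{Q<\lambda P}$, equal to $0$ on $\BRAs{Q>\lambda P}$, and randomized on the boundary $\BRAs{Q=\lambda P}$ with $\sum_w P(w) t^{\ast}(w)=\alpha$. This $t^{\ast}$ is primal feasible, attains the pointwise minimum of $L(\cdot,\lambda)$, and satisfies complementary slackness $\lambda\BRA{\sum_w P(w) t^{\ast}(w)-\alpha}=0$, so the primal value $\sum_w Q(w) t^{\ast}(w)$ equals the dual value in \eqref{Formula:Beta}, closing the duality gap and proving \eqref{Formula:Beta:Sup}.

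For the ``only if'' direction I would argue that any $\lambda$ for which \eqref{Formula:Beta} holds must be a dual optimizer, hence must satisfy the KKT conditions: stationarity forces any primal optimum to be $1$ on $\BRAs{Q<\lambda P}$ and $0$ on $\BRAs{Q>\lambda P}$, while complementary slackness (together with the Neyman--Pearson characterization of primal optima) forces the power constraint to be tight, giving $P\BRAs{Q<\lambda P}\le \alpha \le P\BRAs{Q\le \lambda P}$, which is \eqref{Formula:OptimalLambda}. The main subtleties, and probably the only real obstacles to writing this cleanly, are the bookkeeping on the boundary set $\BRAs{w:Q(w)=\lambda P(w)}$ (where the pointwise minimizer of the Lagrangian is non-unique) and the convention $Q(w)/P(w)=+\infty$ at points with $P(w)=0$; once those edge cases are handled, attainment of the supremum in \eqref{Formula:Beta:Sup} follows because the dual objective is piecewise linear in $\lambda$ with breakpoints only at values of $Q(w)/P(w)$.
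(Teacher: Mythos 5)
Your proposal is correct in substance but takes a genuinely different route from the paper's proof. You view \eqref{binary_hypothsis:beta} as a linear program and derive \eqref{Formula:Beta:Sup} by Lagrangian duality: pointwise minimization of the Lagrangian produces the dual objective, weak duality gives one inequality, and an explicit threshold test randomized on the boundary set $\BRAs{w:Q(w)=\lambda P(w)}$ (constructible precisely when \eqref{Formula:OptimalLambda} holds) closes the gap, while the ``only if'' direction comes from complementary slackness/KKT applied to a primal optimizer. The paper argues entirely by hand: it first verifies the identity \eqref{Formula:Beta} at the threshold $\lambda$ of the optimal Neyman--Pearson test by direct computation, and then compares the objective at an arbitrary $\lambda_1$ (treating $\lambda_1<\lambda$ and $\lambda_1>\lambda$ separately) with its value at $\lambda$, showing the difference is nonpositive and strictly negative whenever $\lambda_1$ violates \eqref{Formula:OptimalLambda}; this yields the max formula and the ``only if'' part simultaneously without invoking duality theory. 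Your route is more conceptual and transfers readily to more general settings; the paper's is elementary and self-contained. Two points you should tighten: your side remark that the dual objective is ``non-increasing on $(-\infty,0]$'' has the direction backwards --- for $\lambda\le 0$ it equals $\lambda\alpha$, hence is non-decreasing and bounded above by its value $0$ at $\lambda=0$, which is the actual reason negative $\lambda$ is harmless; and to conclude \eqref{Formula:Beta:Sup} with an attained maximum you must still exhibit some $\lambda$ satisfying \eqref{Formula:OptimalLambda} (e.g.\ the $\alpha$-quantile of $Q/P$ under $P$, i.e.\ exactly the Neyman--Pearson threshold the paper starts from), and in the ``only if'' step you should handle $\lambda=0$ separately, since there complementary slackness does not force the power constraint to be tight and the right-hand inequality of \eqref{Formula:OptimalLambda} follows instead from feasibility of the optimal test supported on $\BRAs{w:Q(w)=0}$.
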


\ifFullProofs
The proof appears in Appendix \ref{App:BinaryHypLemma}.
\else
The proof appears in \cite{ElkayamF15OnTheCalc} and is omitted here due to space limitation.
\fi


\section{Analysis of the minimax-converse} \label{Sec:ApplicationToMetaConverse}
\subsection{General definitions}
Consider an abstract channel coding problem; that is, a random transformation defined by a pair of measurable spaces
of inputs $\cX$ and outputs $\cY$ and a conditional probability measure $W_{Y|X}: \cX \mapsto \cY$. The notation $\cP \BRA{\cA}$ stands for the set of all probability distributions on $\cA$. Throughout this paper we assume that $|\cX| < \infty, |\cY| < \infty$. We use $\max$ and $\min$ instead of $\sup$ and $\inf$ as we generally deal with convex/concave optimization problems over compact spaces and the $\sup/\inf$ is generally attained by some element. For a distribution $Q_X\in \cP \BRA{\cX}$ and $Q_Y\in\cP \BRA{\cY}$, denote by $Q_X W_{Y|X}$ the joint distribution on $\cX\times\cY$ where $\BRA{Q_X W_{Y|X}}(\bx,\by) = Q_X(\bx)W_{Y|X}(\by|\bx)$ and $\BRA{Q_X \times Q_Y}(\bx,\by) = Q_X(\bx)Q_Y(\by)$.

\subsection{The minimax-converse}

As noted above, Polyanskiy \etal\ \cite{polyanskiy2010channel} proved the following general converse result for the average error probability that come in two flavors:
  For any code with $M$ equiprobable codewords:
  \begin{equation}\label{MetaConverse:1}
    \epsilon \geq \inf_{Q_X}\sup_{Q_Y} \beta_{1-\frac{1}{M}}\BRA{Q_X \times Q_Y, Q_X W_{Y|X}}
  \end{equation}
  \begin{equation}\label{MetaConverse:2}
    \frac{1}{M} \geq \inf_{Q_X}\sup_{Q_Y} \beta_{1-\epsilon}\BRA{Q_X W_{Y|X}, Q_X \times Q_Y}.
  \end{equation}
  where $\epsilon$ is the average error probability.
Eq. \eqref{MetaConverse:1} gives a lower bound on the error probability in terms of the rate while the second flavor, \eqref{MetaConverse:2}, gives an upper bound on the rate in terms of the error probability. Furthermore, using equation \eqref{MetaConverse:2} and instantiating $Q_Y$, it was shown in \cite{polyanskiy2010channel} that most other known converses of the channel coding problem can be derived from this converse. 
In \cite{polyanskiy2013saddle}, the functional properties of the minimax-converse \eqref{MetaConverse:2} have been further investigated. In particular, its convexity w.r.t $Q_X$ and concavity w.r.t $Q_Y$ were shown.

In this paper our focus is on the form \eqref{MetaConverse:1} as this form has been used in \cite{ElkayamITW2015} for the achievable and converse parts. The convexity of \eqref{MetaConverse:1} in $Q_X$ follows from \cite[Theorem 6]{polyanskiy2013saddle}; however, the functional is not concave with respect to $Q_Y$ in general. Applying Lemma \ref{Lemma:BinaryHyp} to this case gives the following formula:

\ifTwoCols
  \begin{align*}
    &\beta_{1-e^{-R}}\BRA{Q_X Q_Y, Q_X W_{Y|X}} \\
    &= \max_{\lambda}\BRA{ \sum_{\bx,\by} Q_X(\bx)\min\BRA{W_{Y|X}(\by|\bx),\lambda Q_Y(\by)} - e^{-R}\lambda }
  \end{align*}
\else
  \begin{align*}
    \beta_{1-e^{-R}}\BRA{Q_X Q_Y, Q_X W_{Y|X}} = \max_{\lambda}\BRA{ \sum_{\bx,\by} Q_X(\bx)\min\BRA{W_{Y|X}(\by|\bx),\lambda Q_Y(\by)} - e^{-R}\lambda }
  \end{align*}
\fi

The convexity of $\beta_{1-e^{-R}}\BRA{Q_X Q_Y, Q_X W_{Y|X}}$ with respect to $Q_X$ then follows easily since it is the $\max$ of the convex (affine) function of $Q_X$. Unfortunately, $\beta$ is not concave in $Q_Y$. Yet, in order to analyze the minimax converse, we define a new function $\gamma$ over a larger domain, which (as shown below) is convex-concave:

\begin{definition}
For any distribution $Q_X \in \cP \BRA{\cX}$ and $\bz=\BRA{\bzy} \in [0,1]^{|\cY|}=\BRAs{\BRA{\bzy}\in \mathbb{R}^{|\cY|} : 0 \leq \bzy \leq 1}$\footnote{Throughout this paper $\bz$ will stand for a vector, indexed by the elements $\cY$, \ie, the component of $\bz$ are $\bz_y$.}:

\ifTwoCols
\begin{align}\label{Def:Gamma}
  &\gamma_{1-e^{-R}}(Q_X,\bz,\Wyx) \notag \\
  &= \sum_{\bx,\by} Q_X(\bx)\min\BRA{W_{Y|X}(\by|\bx),\bzy} - e^{-R}\sum_{\by} \bzy
\end{align}
\else
\begin{equation}\label{Def:Gamma}
  \gamma_{1-e^{-R}}(Q_X,\bz,\Wyx) = \sum_{\bx,\by} Q_X(\bx)\min\BRA{W_{Y|X}(\by|\bx),\bzy} - e^{-R}\sum_{\by} \bzy
\end{equation}
\fi

Since throughout this paper $\Wyx$ and $R$ are held fixed, we will abbreviate and write $\gamma(Q_X,\bz)$ instead of $\gamma_{1-e^{-R}}(Q_X,\bz,\Wyx)$.
\end{definition}

Some properties of $\gamma(Q_X,\bz)$ are summarized in the following theorem. In particular, the functional admits a saddle point.
\begin{theorem}\label{Theorem:Gamma:Propeties}\mynewline
$\gamma(Q_X,\bz)$ is convex in $Q_X$, concave in $\bz$ and admits a saddle point $\BRA{Q_X^*, \bz^*}$, \ie
\begin{equation}\label{SaddlePoint:Value}
  \gamma(Q_X^*,\bz) \leq \gamma(Q_X^*,\bz^*) \leq \gamma(Q_X,\bz^*)
\end{equation}
for all $Q_X$, $\bz$. In particular:
\begin{equation}\label{SaddlePoint:Minimax}
  \epsilon = \min_{Q_X} \max_{\bz} \gamma(Q_X,\bz) = \max_{\bz} \min_{Q_X} \gamma(Q_X,\bz)
\end{equation}
Moreover, for $\bx$ such that $Q_X^*(\bx) > 0$ we have:
\begin{equation}\label{Formula:3}
  \epsilon = \sum_{\by} \min\BRA{W_{Y|X}(\by|\bx),\bzy^*} - e^{-R}\sum_{\by} \bzy^*
\end{equation}
and for $\bx$ such that $Q_X^*(\bx) = 0$:
\begin{equation}\label{Formula:4}
  \epsilon \leq \sum_{\by} \min\BRA{W_{Y|X}(\by|\bx),\bzy^*} - e^{-R}\sum_{\by} \bzy^*
\end{equation}
\end{theorem}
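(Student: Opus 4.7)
The plan is to verify the convex-concave structure of $\gamma$, invoke a minimax theorem to produce a saddle point, reconcile the resulting min--max value with the right--hand side of \eqref{MetaConverse:1} via Lemma~\ref{Lemma:BinaryHyp}, and finally read off the KKT--type conditions \eqref{Formula:3}--\eqref{Formula:4} from the linearity of $\gamma$ in $Q_X$.

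First I would establish the convex-concave structure. Convexity in $Q_X$ is immediate and in fact free: the expression \eqref{Def:Gamma} is affine in $Q_X$, since the inner terms $\min(W_{Y|X}(\by|\bx),\bzy)$ do not depend on $Q_X$ and the offset $-e^{-R}\sum_\by \bzy$ is constant. Concavity in $\bz$ follows by noting that for each fixed $\bx,\by$ the scalar map $\bzy \mapsto \min(W_{Y|X}(\by|\bx),\bzy)$ is the pointwise minimum of two affine functions, hence concave; summing concave functions and subtracting an affine term preserves concavity. Continuity is obvious, and both $\cP(\cX)$ and $[0,1]^{|\cY|}$ are compact convex sets, so Sion's (or even Von Neumann's) minimax theorem delivers a saddle point $(Q_X^*,\bz^*)$ and the equality $\min_{Q_X}\max_\bz \gamma = \max_\bz\min_{Q_X}\gamma$ in \eqref{SaddlePoint:Minimax}.

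Next I would identify this common value with the $\epsilon$ of \eqref{MetaConverse:1}. Apply Lemma~\ref{Lemma:BinaryHyp} to $\beta_{1-e^{-R}}(Q_X \times Q_Y, Q_X W_{Y|X})$ to get
\begin{equation*}
  \beta_{1-e^{-R}}(Q_X\times Q_Y, Q_X W_{Y|X}) = \max_{\lambda\geq 0}\left(\sum_{\bx,\by} Q_X(\bx)\min(W_{Y|X}(\by|\bx),\lambda Q_Y(\by)) - e^{-R}\lambda\right).
\end{equation*}
Reparametrize via $\bzy=\lambda Q_Y(\by)$: as $(Q_Y,\lambda)$ ranges over $\cP(\cY)\times[0,\infty)$, the vector $\bz$ ranges over all nonnegative vectors and $\sum_\by \bzy = \lambda$. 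Hence $\sup_{Q_Y}\beta_{1-e^{-R}}(Q_X\times Q_Y, Q_X W_{Y|X})=\sup_{\bz\geq 0}\gamma(Q_X,\bz)$. Finally, whenever some coordinate $\bzy$ exceeds $1$, the inner $\min$ saturates at $W_{Y|X}(\by|\bx)\leq 1$ while the subtracted term $e^{-R}\bzy$ strictly grows, so the supremum is attained in $[0,1]^{|\cY|}$. Thus $\sup_{Q_Y}\beta = \max_{\bz\in[0,1]^{|\cY|}}\gamma(Q_X,\bz)$, and taking $\inf_{Q_X}$ on both sides matches \eqref{MetaConverse:1} with \eqref{SaddlePoint:Minimax}.

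The conditions \eqref{Formula:3} and \eqref{Formula:4} then follow from the outer minimization at the saddle point. Since $\gamma(Q_X,\bz^*)$ is affine in $Q_X$, writing
\begin{equation*}
  \gamma(Q_X,\bz^*) = \sum_{\bx} Q_X(\bx)\left[\sum_{\by}\min(W_{Y|X}(\by|\bx),\bzy^*)\right] - e^{-R}\sum_\by \bzy^*,
\end{equation*}
the minimizer $Q_X^*$ over the simplex must be supported on the set of $\bx$ that minimize the bracketed coefficient. For such $\bx$ with $Q_X^*(\bx)>0$, the bracketed value equals the common minimum, so by the saddle value $\gamma(Q_X^*,\bz^*)=\epsilon$ we recover \eqref{Formula:3}; for all other $\bx$ the bracketed coefficient is no smaller, which yields \eqref{Formula:4}.

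The main obstacle I anticipate is the identification step between $\gamma$ and the original $\beta$ functional: one must confirm that the artificial enlargement from $(Q_Y,\lambda)$ to the unconstrained $\bz\in[0,1]^{|\cY|}$ does not change the optimization value and preserves the outer structure. Once this parametric equivalence is established, convexity--concavity, saddle-point existence, and the optimality conditions all follow by standard minimax and linear-programming arguments.
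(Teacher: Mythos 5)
Your proof is correct and follows essentially the same route as the paper: affine dependence on $Q_X$ plus concavity of the pointwise $\min$ in $\bz$, a minimax theorem over the two compact convex sets (the paper cites Fan's theorem, you cite Sion's/von Neumann's — interchangeable here), and then the support/optimality conditions \eqref{Formula:3}--\eqref{Formula:4} read off from the linearity of $\gamma(\cdot,\bz^*)$ over the simplex. The only difference is that you also fold in the identification of the saddle value with $\sup_{Q_Y}\beta_{1-e^{-R}}$ via the reparametrization $\bzy=\lambda Q_Y(\by)$ and the restriction to $\bzy\leq 1$; the paper defers exactly this step to Theorem~\ref{Theorem:Beta:Saddle}, so including it is harmless and consistent.
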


\ifFullProofs
\begin{proof}
Note that both $Q_X$ and $\bz$ range over convex compact sets and that $\gamma(Q_X,\bz)$ is a convex--concave functional (affine in $Q(\bx)$ and concave in $\bz$ by the concavity of the $\min$ function)
and $\gamma(Q_X,\bz)$ is continuous in both arguments. The existence of the saddle point and \eqref{SaddlePoint:Minimax} follow from the Fan's minimax theorem \cite{fan1953minimax}.

By the saddle point property:
  \begin{align*}
    \epsilon &= \gamma(Q_X^*,\bz^*) = \min_{Q_X} \gamma(Q_X,\bz^*) \\
  \end{align*}

Note that:
  \begin{align*}
     \gamma(Q_X,\bz) &=\sum_{\bx} Q_X(\bx)\BRA{\sum_{\by}\min\BRA{\Wyx,\bzy} - e^{-R}\sum_{\by} \bzy}
  \end{align*}
and:
  \begin{align}\label{GammaMinVal}
    \min_{Q_X} \gamma(Q_X,\bz^*) &= \min_{Q_X}\BRAs{\sum_{\bx} Q(\bx)\BRA{\sum_{\by}\min\BRA{\Wyx,\bzy^*} - e^{-R}\sum_{\by} \bzy^*} } \notag \\
    &= \min_{\bx\in\cX}\BRAs{\sum_{\by}\min\BRA{\Wyx,\bzy^*} - e^{-R}\sum_{\by} \bzy^*}
  \end{align}

hence \eqref{Formula:3} and \eqref{Formula:4} follow from the linearity of $\gamma(Q_X,\bz)$ in $Q_X$.
\end{proof}

\else
The proof of the Theorem appears in \cite{ElkayamF15OnTheCalc} and is omitted here due to space limitation.
\fi

The next theorem presents the connection between $\gamma(Q_X,\bz)$ and $\beta_{1-e^{-R}}\BRA{Q(\bx)Q(\by), Q(\bx)\Wyx}$.
\begin{theorem} \label{Theorem:Beta:Saddle}\mynewline
For any distribution $Q_X$ the following holds:
\begin{equation}\label{Formula:1}
  \max_{Q_Y} \beta_{1-e^{-R}}\BRA{Q_X\times Q_Y, Q_X W_{Y|X}} = \max_{\bz} \gamma(Q_X,\bz)
\end{equation}
  Moreover, $\bz^*$ attains the maximum in \eqref{Formula:1} if and only if for each $\by$:

\ifTwoCols
\else
\fi

\ifTwoCols
  \begin{align}\label{OptimalConditionForZ}
    Q_X\BRAs{\bx:\Wyx > \bzy^*} &\leq e^{-R} \notag \\
    &\leq Q_X\BRAs{\bx:\Wyx \geq \bzy^*} 
  \end{align}
\else
  \begin{equation}\label{OptimalConditionForZ}
    Q_X\BRAs{\bx:\Wyx > \bzy^*} \leq e^{-R} \leq Q_X\BRAs{\bx:\Wyx \geq \bzy^*} 
  \end{equation}
\fi

\end{theorem}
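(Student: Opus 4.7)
The plan is to derive \eqref{Formula:1} by substituting the variational formula of Lemma~\ref{Lemma:BinaryHyp} and making a change of variables, and then to derive the optimality condition \eqref{OptimalConditionForZ} from the concavity and coordinate-separability of $\gamma$ in $\bz$. No single step looks especially hard; the main bookkeeping issue is to argue that the substitution $\bzy=\lambda Q_Y(\by)$ really turns the joint maximum over $(\lambda,Q_Y)$ into a maximum over $\bz$ on the correct domain, and to justify restricting from $[0,\infty)^{|\cY|}$ back to $[0,1]^{|\cY|}$.

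First, applying Lemma~\ref{Lemma:BinaryHyp} with $P=Q_X\times Q_Y$, $Q=Q_X W_{Y|X}$, $\alpha=1-e^{-R}$, and pulling the common factor $Q_X(\bx)$ out of the $\min$ gives
\begin{equation*}
\beta_{1-e^{-R}}\BRA{Q_X\times Q_Y,\,Q_X W_{Y|X}} = \max_{\lambda\geq 0}\BRAs{\sum_{\bx,\by} Q_X(\bx)\min\BRA{\Wyx,\lambda Q_Y(\by)} - \lambda e^{-R}}.
\end{equation*}
I would then take the maximum also over $Q_Y\in\cP\BRA{\cY}$ and change variables from $(\lambda,Q_Y)$ to $\bz\in[0,\infty)^{|\cY|}$ via $\bzy=\lambda Q_Y(\by)$, so that $\lambda=\sum_\by \bzy$. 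This is a bijection (up to the irrelevant boundary case $\lambda=0$, where both expressions vanish), and after substitution the joint maximum becomes $\max_{\bz\in[0,\infty)^{|\cY|}}\gamma(Q_X,\bz)$. Since $\Wyx\leq 1$, any coordinate $\bzy>1$ is strictly dominated by $\bzy=1$ (the $\min$ saturates at $\Wyx$ while the linear penalty $e^{-R}\bzy$ keeps growing), so the maximum is attained in $[0,1]^{|\cY|}$, establishing \eqref{Formula:1}.

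For the optimality condition, I would exploit the fact that $\gamma(Q_X,\bz)=\sum_\by f_\by(\bzy)$ is separable in the coordinates of $\bz$, with $f_\by(t) = \sum_\bx Q_X(\bx)\min(\Wyx,t) - e^{-R}\,t$ a concave, piecewise linear function of $t\geq 0$. Hence $\bz^*$ is a global maximizer iff $\bzy^*$ maximizes $f_\by$ for every $\by$. A direct computation of the one-sided derivatives yields
\begin{equation*}
f_\by'(t^-) = Q_X\BRAs{\bx:\Wyx\geq t} - e^{-R},\qquad f_\by'(t^+) = Q_X\BRAs{\bx:\Wyx > t} - e^{-R},
\end{equation*}
and concavity of $f_\by$ implies that $\bzy^*\geq 0$ is optimal iff $f_\by'(t^+)\leq 0\leq f_\by'(t^-)$ at $t=\bzy^*$, which are precisely the two inequalities in \eqref{OptimalConditionForZ}. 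The only remaining care is the boundary $\bzy^*=0$, where the left inequality $Q_X\{\bx:\Wyx\geq 0\}=1\geq e^{-R}$ is automatic, so the single two-sided condition in \eqref{OptimalConditionForZ} covers all cases uniformly.
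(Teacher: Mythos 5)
Your proof is correct. The first half (the identity \eqref{Formula:1}) is essentially the paper's own argument: apply Lemma~\ref{Lemma:BinaryHyp}, substitute $\bzy=\lambda Q_Y(\by)$ with $\lambda=\sum_{\by}\bzy$, and note that coordinates above $1$ are dominated, exactly as in the paper's remark that $\gamma(Q_X,\bz)\leq\gamma(Q_X,\min(\bz,1))$. For the optimality condition \eqref{OptimalConditionForZ}, however, you take a genuinely different and more self-contained route. The paper rewrites $\gamma(Q_X,\bz)$ through the induced output marginal and the reverse channel, $Q_X(\bx)\Wyx = Q_Y(\by)W_{X|Y}(\bx|\by)$, identifies each per-$\by$ term (after dividing by $Q_Y(\by)$) with the variational expression of a binary hypothesis test $\beta_{1-e^{-R}}\BRA{Q_X,W_{X|Y}(\cdot|\by)}$, and then imports the if-and-only-if condition \eqref{Formula:OptimalLambda} of Lemma~\ref{Lemma:BinaryHyp} for the optimal threshold $\bzy/Q_Y(\by)$, which after rearrangement gives \eqref{OptimalConditionForZ}; this requires the side assumption $Q_Y(\by)>0$ (which the paper waves away as notational). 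You instead exploit the separability $\gamma(Q_X,\bz)=\sum_{\by}f_{\by}(\bzy)$ directly, compute the one-sided derivatives of the concave piecewise-linear $f_{\by}$, and read off \eqref{OptimalConditionForZ} as the first-order optimality condition, with the boundary $\bzy^*=0$ (and likewise $\bzy^*=1$, where $Q_X\BRAs{\bx:\Wyx>1}=0\leq e^{-R}$ is automatic) absorbed into the same two-sided condition. What the paper's route buys is the explicit per-$\by$ hypothesis-testing interpretation, which it later uses to recover the formula of Matthews (Proposition 14); what your route buys is a shorter, assumption-free argument that never needs the reverse channel or positivity of $Q_Y$, and in fact re-derives the content of Lemma~\ref{Lemma:BinaryHyp}'s optimality condition in this special case rather than invoking it.
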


\ifFullProofs

\begin{proof}
  \eqref{Formula:1} follows from:
  \begin{align*}
    \max_{Q_Y} \beta_{1-e^{-R}}\BRA{Q_X\times Q_Y, Q_X W_{Y|X}} &= \max_{Q_Y} \max_{\lambda}\BRA{ \sum_{\bx,\by} Q_X(\bx)\min\BRA{W_{Y|X}(\by|\bx),\lambda Q_Y(\by)} - e^{-R}\lambda } \\
                                                                       &= \max_{\bz} \BRA{ \sum_{\bx,\by} Q_X(\bx)\min\BRA{W_{Y|X}(\by|\bx),\bzy} - e^{-R}\sum_{\by} \bzy}
  \end{align*}
  where we write $\bzy = \lambda Q(\by)$ and use $\lambda = \sum_{\by} \bzy$. Note that to attain the maximum, we can restrict $\bzy \leq 1$ since $\gamma(Q_X,\bz) \leq \gamma(Q_X,\min(\bz,1))$.
To prove \eqref{OptimalConditionForZ}:
  \begin{align*}
    \gamma(Q_X,\bz) &= \sum_{\bx,\by} Q_X(\bx)\min\BRA{W_{Y|X}(\by|\bx),\bzy} - e^{-R}\sum_{\by} \bzy \\
    &= \sum_{\by} \BRA{\sum_{\bx}\min\BRA{W_{Y|X}(\by|\bx)Q_X(\bx),\bzy Q_X(\bx)} - e^{-R}\bzy} \\
    &= \sum_{\by} \BRA{\sum_{\bx}\min\BRA{W_{X|Y}(\bx|\by)Q_Y(\by),\bzy Q_X(\bx)} - e^{-R}\bzy} \\
    &= \sum_{\by} \BRA{\sum_{\bx}Q_Y(\by) \min\BRA{W_{X|Y}(\bx|\by),\frac{\bzy}{Q_Y(\by)} Q_X(\bx)} - e^{-R}\bzy} \\
    &= \sum_{\by} Q_Y(\by) \BRA{\sum_{\bx} \min\BRA{W_{X|Y}(\bx|\by),\frac{\bzy}{Q_Y(\by)} Q_X(\bx)} - e^{-R}\frac{\bzy}{Q_Y(\by)}} \\
  \end{align*}
where we assumed $Q_Y(\by) > 0$ for all $\by$ to avoid cumbersome notation.
  \begin{align*}
    \sup_{\bz} \gamma(Q_X,\bz) &= \sup_{\bz} \sum_{\by} Q_Y(\by) \BRA{\sum_{\bx} \min\BRA{W_{X|Y}(\bx|\by),\frac{\bzy}{Q_Y(\by)} Q_X(\bx)} - e^{-R}\frac{\bzy}{Q_Y(\by)}} \\
    &=  \sum_{\by} Q_Y(\by) \sup_{\bzy}  \BRA{\sum_{\bx} \min\BRA{W_{X|Y}(\bx|\by),\frac{\bzy}{Q_Y(\by)} Q_X(\bx)} - e^{-R}\frac{\bzy}{Q_Y(\by)}} \\
    &=  \sum_{\by} Q_Y(\by) \beta_{1-e^{-R}}\BRA{Q_X, W_{X|Y}} \\
  \end{align*}
  Moreover, the optimal $\bzy$ must satisfy condition \eqref{Formula:OptimalLambda}:
$$Q_X\BRAs{\bx:\frac{W(\bx|\by)}{Q(\bx)} < \frac{\bzy}{Q(\by)}} \leq 1-e^{-R} \leq Q_X\BRAs{\bx:\frac{W(\bx|\by)}{Q(\bx)} \leq \frac{\bzy}{Q(\by)}}$$
which gives \eqref{OptimalConditionForZ} after rearranging the terms.
\end{proof}
\else
The proof of this Theorem is also omitted for lack of space and appears in \cite{ElkayamF15OnTheCalc}.
\fi

\ifFullProofs
\begin{remark}
  Combining the last theorem with \eqref{GammaMinVal} we recover the formula that appears in \cite[Proposition 14]{matthews2012linear} where it was proven by indirect arguments relying on the duality in linear programming.
\end{remark}
\else
  Interestingly, we note that by combining Theorems \ref{Theorem:Gamma:Propeties} and \ref{Theorem:Beta:Saddle} we recover the formula that appears in \cite[Proposition 14]{matthews2012linear} which was proven by indirect arguments relying on the duality in linear programming.
\fi

Theorems \ref{Theorem:Gamma:Propeties} and \ref{Theorem:Beta:Saddle} provide necessary conditions, \eqref{Formula:3},\eqref{Formula:4} and \eqref{OptimalConditionForZ} for the saddle point $Q^*_X$ and $\bz^*$. The following theorem shows that these conditions are also sufficient.

\begin{theorem}
Any distribution $Q_X^*$ and $\bz^*$ satisfy conditions \eqref{Formula:3} and \eqref{Formula:4} and \eqref{OptimalConditionForZ} is a saddle point of $\gamma(Q_X,\bz)$.
\end{theorem}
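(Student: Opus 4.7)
The plan is to verify the two saddle point inequalities separately by leveraging Theorem \ref{Theorem:Beta:Saddle} for one direction and the linearity of $\gamma$ in $Q_X$ for the other.

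First I would handle the inequality $\gamma(Q_X^*,\bz) \leq \gamma(Q_X^*,\bz^*)$ for every $\bz$. The assumption \eqref{OptimalConditionForZ} holds for $\bz^*$ together with $Q_X^*$, and Theorem \ref{Theorem:Beta:Saddle} asserts that this condition is \emph{equivalent} to $\bz^*$ attaining $\max_{\bz} \gamma(Q_X^*,\bz)$. Hence the left saddle inequality is immediate from the ``if'' direction of that theorem; no extra calculation is needed.

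Next I would handle $\gamma(Q_X^*,\bz^*) \leq \gamma(Q_X,\bz^*)$ for every $Q_X\in\cP(\cX)$. Since $\gamma(\cdot,\bz^*)$ is linear (affine) in $Q_X$, we can write
\begin{equation*}
  \gamma(Q_X,\bz^*) \;=\; \sum_{\bx} Q_X(\bx)\, g(\bx), \qquad g(\bx)\;\triangleq\;\sum_{\by}\min\BRA{\Wyx,\bzy^*}-e^{-R}\sum_{\by}\bzy^*.
\end{equation*}
Assumptions \eqref{Formula:3} and \eqref{Formula:4} state that $g(\bx)=\epsilon$ whenever $Q_X^*(\bx)>0$ and $g(\bx)\geq \epsilon$ otherwise. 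Setting $\epsilon \triangleq \gamma(Q_X^*,\bz^*)$, this gives $\gamma(Q_X^*,\bz^*)=\sum_{\bx} Q_X^*(\bx)g(\bx)=\epsilon$ and $\gamma(Q_X,\bz^*)=\sum_{\bx} Q_X(\bx)g(\bx)\geq \epsilon$ for any $Q_X$, which is precisely the right saddle inequality.

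Combining the two inequalities yields \eqref{SaddlePoint:Value}, so $(Q_X^*,\bz^*)$ is a saddle point. I do not anticipate a real obstacle here: the only subtlety is recognizing that the ``only if'' direction of Theorem \ref{Theorem:Beta:Saddle} has a matching ``if'' direction that supplies exactly the global maximization property needed, and that the complementary slackness pattern encoded in \eqref{Formula:3}--\eqref{Formula:4} is tailor made for exploiting the linearity of $\gamma$ in its first argument.
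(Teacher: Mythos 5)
Your proposal is correct and follows essentially the same route as the paper: the left saddle inequality comes from the characterization of the maximizers of $\gamma(Q_X^*,\cdot)$ given by condition \eqref{OptimalConditionForZ} (the ``if'' direction of Theorem \ref{Theorem:Beta:Saddle}), and the right one from the affinity of $\gamma$ in $Q_X$ combined with \eqref{Formula:3} and \eqref{Formula:4}. The only cosmetic point is that $\epsilon$ should be read as the common value in \eqref{Formula:3}, which then automatically equals $\gamma(Q_X^*,\bz^*)$ by averaging over the support of $Q_X^*$, exactly as you note.
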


\ifFullProofs
\begin{proof}
We need to show that:  $$\gamma(Q_X^*,\bz) \leq \gamma(Q_X^*,\bz^*) \leq \gamma(Q_X,\bz^*)$$
The left hand side follows from \eqref{OptimalConditionForZ} and the right hand side from \eqref{Formula:3},\eqref{Formula:4} and the linearity in $Q_X$.
\end{proof}

\else
The proof of this Theorem appears too in \cite{ElkayamF15OnTheCalc}.
\fi


\section{An Algorithm for the computation of the saddle point - High level description} \label{Sec:HighLevelDescription}

In the following sections we present our algorithm for the computation of the saddle point. We first give a high level review of the ingredients of the algorithm.

The general idea is to generate a sequence $\BRA{Q_X^{(k)},\bz^{(k)}}$ such that:
\ifTwoCols
\begin{align*}
  &\gamma(Q_X^{(k)},\bz^{(k)})=\max_{\bz}\gamma(Q_X^{(k)},\bz) \\
  &> \max_{\bz}\gamma(Q_X^{(k+1)},\bz)=\gamma(Q_X^{(k+1)},\bz^{(k+1)})
\end{align*}
\else
\begin{equation*}
  \gamma(Q_X^{(k)},\bz^{(k)})=\max_{\bz}\gamma(Q_X^{(k)},\bz) > \max_{\bz}\gamma(Q_X^{(k+1)},\bz)=\gamma(Q_X^{(k+1)},\bz^{(k+1)})
\end{equation*}
\fi

The initial step takes any distribution $Q_X^{(0)}$ and calculate $\bz^{(0)}$ using \eqref{OptimalConditionForZ}. Then, each iteration contains two steps as we now describe:
\subsection{Optimizing $Q_X^{(k+1)}$ for a given $\bz^{(k)}$}
Given $\bz^{(k)}$ we can find a distribution $Q_X^{(k+1)}$ that minimizes $\gamma(Q_X,\bz^{(k)})$ subject to condition \eqref{OptimalConditionForZ}. This is a linear program with $|\cX|$ variables, $2\cdot |\cY|+|\cX|$ linear inequalities, ($2\cdot |\cY|$ for \eqref{OptimalConditionForZ} and $|\cX|$ for the nonnegativity of $Q_X(\bx)$), and additional equality for $Q_X(\bx)$ to sum to 1. If:
\begin{equation*}
  \min_{Q_X}\gamma(Q_X,\bz^{(k)}) < \gamma(Q_X^{(k)},\bz^{(k)})
\end{equation*}

Then we define:
\begin{enumerate}
  \item $\bz^{(k+1)}=\bz^{(k)}$
  \item $Q_X^{(k+1)} = \argmin_{Q_X}\gamma(Q_X,\bz^{(k)})$
\end{enumerate}

We will refer to this stage as a \textbf{local linear optimization} and say that $Q_X^{(k+1)}$ is \textbf{locally optimal} given $\bz^{(k)}$.

\subsection{Improving a locally optimal solution}
When we hold a locally optimal solution $Q_X^{(k)}$, we have to change $\bz^{(k)}$ in order to improve (reduce) the current score (\ie, $\gamma(Q_X^{(k)},\bz^{(k)})$). Consider any perturbation $\mu$ on $Q_X$, \ie,
$\sum_{\bx}\mu(\bx)=0$, and let $Q_X^{\mu} = Q_X^{(k)}+\delta\mu$ where $\delta$ is small enough.\footnote{Note that when $Q_X(\bx)=0$ we must take $\mu(\bx) \geq 0$ and if $Q_X(\bx)=1$ we must take $\mu(\bx) < 0$} For $Q_X^{\mu}$, let $\bz^{\mu}$ satisfy the condition \eqref{OptimalConditionForZ} with respect to $Q_X^{\mu}$. Let:
\begin{equation}\label{OptimizingForMu}
  \eta(\mu) = \frac{  \gamma(Q_X^{\mu},\bz^{\mu})-\gamma(Q_X^{(k)},\bz^{(k)})  }{\delta}
\end{equation}
If $\min_{\mu}\eta(\mu) = 0$ then we cannot improve $Q_X^{(k)}$ and we have a \textbf{globally optimal solution}. If $\eta(\mu) < 0$ for some $\mu$, then we found an improvement of the score function and we define:
\begin{enumerate}
  \item $\bz^{(k+1)}=\bz^{\mu}$
  \item $Q_X^{(k+1)} = Q_X^{\mu}$
\end{enumerate}

In practice we will show that the problem of minimizing \eqref{OptimizingForMu} can be translated to a linear program as well (up to some regularities that we will have to handle separately), which will allow us to solve it.

\section{Improving a locally optimal solution - Details}\label{Sec:AlgorithmDetail}
In this section we describe in detail how to implement step B of the iteration, described above in high level.

Fix $Q_X$ and $\bz$ and assume the $Q_X$ is locally optimal with respect to $\bz$. Let $\mu$ be a perturbation of $Q_X$, \ie, $\mu \in \mathbb{R}^{|\cX|}$ with $\sum_{\bx}\mu(\bx)=0$.
Recall that by \eqref{OptimalConditionForZ} for each $y$ we have:
$$ Q_X\BRAs{\bx:\Wyx > \bzy} \leq e^{-R} \leq Q_X\BRAs{\bx:\Wyx \geq \bzy}$$

Assume initially that $Q_X(\bx) > 0$ for all $\bx$. We point out in the sequel where we need this assumption. When we do have zeros in the distribution $Q_X(\bx)$ we will restrict ourselves to the subset: $\BRAs{\bx \in \cX: Q_X(\bx) > 0}$. In subsection \ref{Sec:Zeros} we explain how to recover from this assumption.

\subsection{Notation}
We will make use of the following notation through this section.
\begin{enumerate}
  \item $\Ind{\Wyx \geq \bzy}$ denotes a vector, indexed by $\bx$ with $\Ind{\Wyx \geq \bzy}(\bx) = 1$ if $\Wyx \geq \bzy$ and 0 otherwise. Define $\Ind{\Wyx > \bzy}$ likewise.
  \item $\mu^T\cdot L$ is the scalar product between the vectors $\mu$ and $L$, \ie: $\mu^T\cdot L = \sum_{\bx}\mu(\bx)L(\bx)$.
\end{enumerate}
\subsection{Phase I: Changing $\bz$ to achieve strict inequality on the left hand side of \eqref{OptimalConditionForZ}}
Throughout, we assume that:
\begin{align*}
Q_X\BRAs{\bx:\Wyx > \bzy} &< e^{-R} \\
&\leq Q_X\BRAs{\bx:\Wyx \geq \bzy}
\end{align*}
\ie, we have strict inequality on the left hand side of \eqref{OptimalConditionForZ}. If this is not the case, we can change $\bzy$ until this is valid for all $y$.
\ifFullProofs

If $Q_X\BRAs{\bx:\Wyx > \bzy} = e^{-R}$, Let:
$$ \bx_y = \argmin_{\bx}\BRAs{\Wyx: \Wyx > \by_z, Q_X(\bx)>0}$$
Then:
\begin{itemize}
  \item $Q_X\BRAs{\bx:\Wyx > \bzy} = Q_X\BRAs{\bx:\Wyx \geq W_{Y|X}(\by|\bx_y)}$
  \item $Q_X(\bx_y) > 0$
  \item $Q_X\BRAs{\bx:\Wyx > W_{Y|X}(\by|\bx_y)} < e^{-R}$ since $Q_X(\bx_y) > 0$.
\end{itemize}

Replacing $\bzy$ with $W_{Y|X}(\by|\bx_y)$ we have strict inequality on the left hand side in \eqref{OptimalConditionForZ} and we haven't changed the local optimality since the optimality condition \eqref{OptimalConditionForZ} still holds by construction.
\else
The exact details appears in the full paper \cite{ElkayamF15OnTheCalc}.
\fi

\subsection{Phase II: Compute Alternative $\bz$ with strict inequality on the right hand side of \eqref{OptimalConditionForZ}}
Following the same reasoning, we can find $\bz^l_y \leq \bzy$ that also satisfy \eqref{OptimalConditionForZ} with the following additional properties:

\ifTwoCols
\begin{itemize}
  \item For each $\by$, if we have a strict inequality in both sides of \eqref{OptimalConditionForZ} with respect to $\bzy$ then $\bzy^l=\bzy$.
  \item For each $\by$ we have a strict inequality on the right hand side of \eqref{OptimalConditionForZ}, \ie, $e^{-R} < Q_X\BRAs{\bx:\Wyx \geq \bzy^l}$
  \item If both $Q_X\BRAs{\bx:\Wyx > \bzy^l} = e^{-R}$ and $Q_X\BRAs{\bx:\Wyx \geq \bzy} = e^{-R}$ then: $\Ind{\Wyx > \bzy^l}=\Ind{\Wyx \geq \bzy}$.
\end{itemize}

In order for the last equality to hold we must assume that: $Q_X(\bx) > 0$ for all $\bx$.
\else
\begin{itemize}
  \item If $Q_X\BRAs{\bx:\Wyx > \bzy} < e^{-R} < Q_X\BRAs{\bx:\Wyx \geq \bzy}$ then $\bzy^l=\bzy$.
  \item $Q_X\BRAs{\bx:\Wyx > \bzy^l} \leq e^{-R} < Q_X\BRAs{\bx:\Wyx \geq \bzy^l}$
  \item If $Q_X\BRAs{\bx:\Wyx > \bzy^l} = e^{-R} = Q_X\BRAs{\bx:\Wyx \geq \bzy}$ then: $\Ind{\Wyx > \bzy^l}=\Ind{\Wyx \geq \bzy}$.
\end{itemize}

In order for the last equality to hold we must assume that: $Q_X(\bx) > 0$ for all $\bx$.
\fi

\subsection{Phase III: Compute $\bz^{\mu}$}
Let $Q_X^{\mu}=Q_X+\delta\cdot\mu$ where $\delta$ is sufficiently small. Recall that we must find $\bz^{\mu}$ that satisfies the condition \eqref{OptimalConditionForZ} with respect to $Q_X^{\mu}$.
From:
\ifTwoCols
\begin{align*}
&Q_X^{\mu}\BRAs{\bx:\Wyx > \bzy} \\
&= Q_X\BRAs{\bx:\Wyx > \bzy}+\delta\mu^T\cdot \Ind{\bx:\Wyx > \bzy}
\end{align*}
\else
$$ Q_X^{\mu}\BRAs{\bx:\Wyx > \bzy} = Q_X\BRAs{\bx:\Wyx > \bzy}+\delta\mu^T\cdot \Ind{\bx:\Wyx > \bzy}$$
\fi
we always have $$Q_X^{\mu}\BRAs{\bx:\Wyx > \bzy} < e^{-R}$$
for sufficiently small $\delta$ and:

\ifTwoCols
\begin{align*}
  &Q_X^{\mu}\BRAs{\bx:\Wyx \geq \bzy} \geq e^{-R} \\
  &\Leftrightarrow \mu^T\cdot \Ind{\bx:\Wyx \geq \bzy} \geq 0
\end{align*}
\else
$$ Q_X^{\mu}\BRAs{\bx:\Wyx \geq \bzy} \geq e^{-R} \Leftrightarrow \mu^T\cdot \Ind{\bx:\Wyx \geq \bzy} \geq 0$$
\fi

Hence when $\mu^T\cdot \Ind{\bx:\Wyx \geq \bzy} < 0$ we must change $\bzy$ since it does not satisfy condition \eqref{OptimalConditionForZ} anymore.
Since:
$$Q_X^{\mu}\BRAs{\bx:\Wyx \geq \bzy^l} > e^{-R}$$
for sufficiently small $\delta$ and:
\ifTwoCols
\begin{align*}
  &Q_X^{\mu}\BRAs{\bx:\Wyx > \bzy^l} \leq e^{-R} \\
  &\Leftrightarrow \mu^T\cdot \Ind{\bx:\Wyx > \bzy^l} \leq 0
\end{align*}
\else
$$ Q_X^{\mu}\BRAs{\bx:\Wyx > \bzy^l} \leq e^{-R} \Leftrightarrow \mu^T\cdot \Ind{\bx:\Wyx > \bzy^l} \leq 0$$
\fi

Now, from $ \Ind{\bx:\Wyx > \bzy^l} = \Ind{\bx:\Wyx \geq \bzy}$ we have:
$$ \mu^T\cdot \Ind{\bx:\Wyx > \bzy^l} = \mu^T\cdot \Ind{\bx:\Wyx \geq \bzy}$$
and when $\mu^T\cdot \Ind{\bx:\Wyx \geq \bzy} < 0$ we can take $\bzy^l$.

To summarize, let:
\begin{equation}\label{Def:Z_mu}
\bz^{\mu}_y =
\left\{
	\begin{array}{ll}
		\bzy    & \mbox{if } \mu^T \cdot \Ind{\bx: \Wyx \geq \bzy} \geq 0 \\
		\bz^l_y    & \mbox{if } \mu^T \cdot \Ind{\bx: \Wyx \geq \bzy}    < 0
	\end{array}
\right.
\end{equation}
Then $\bz^{\mu}$ satisfies \eqref{OptimalConditionForZ} with respect to $Q_X^{\mu}$ for $\delta$ sufficiently small.

\subsection{Computation of $\gamma(Q_X^{\mu},\bz^{\mu})$}

Let:
\begin{equation}\label{Def:FrLinearPart}
    \eta(\mu,\bz) \triangleq \sum_{\bx,\by}\mu(\bx)\min\BRA{\Wyx, \bzy}
\end{equation}
We have:
\begin{align*}
  \gamma(Q_X^{\mu},\bz^{\mu}) &= \sum_{\bx,\by}\BRA{Q_X(\bx)+\delta\mu(\bx)}\min\BRA{\Wyx, \bz^{\mu}_y}\\
                              &-e^{-R}\sum_{\by}\bz^{\mu}_y \\
                              &= \gamma(Q_X,\bz^{\mu})+\delta \sum_{\bx,\by}\mu(\bx)\min\BRA{\Wyx, \bz^{\mu}_y} \\
                              &= \gamma(Q_X,\bz^{\mu})+\delta \eta(\mu,\bz^{\mu})
\end{align*}
Since $\bz^{\mu}$ also satisfies \eqref{OptimalConditionForZ} with respect to $Q_X$, $\gamma(Q_X,\bz^{\mu}) = \gamma(Q_X,\bz)$ and:
\begin{align*}
  \frac{\gamma(Q_X^{\mu},\bz^{\mu})-\gamma(Q_X,\bz)}{\delta} &= \frac{\gamma(Q_X^{\mu},\bz^{\mu})-\gamma(Q_X,\bz^{\mu})}{\delta} \\
  &= \eta(\mu,\bz^{\mu})
\end{align*}

\ifFullProofs

and:
\begin{align*}
  \eta(\mu,\bz^{\mu})-\eta(\mu,\bz) &= \sum_{\bx,\by} \mu(\bx)\BRA{\min\BRA{\Wyx, \bzy^{\mu}}-\min\BRA{\Wyx, \bzy}} \\
  &=\sum_{\by:\mu^T \cdot \Ind{\Wyx \geq \bzy} < 0} \sum_{\bx} \BRA{\min\BRA{\Wyx, \bzy^l}-\min\BRA{\Wyx, \bzy}} \\
  &\overset{(a)}{=} \sum_{\by:\mu^T \cdot \Ind{\Wyx \geq \bzy} < 0} \BRA{\bzy^l-\bzy}\mu^T\cdot\Ind{\bx: \Wyx \geq \bzy} \\
  &= \sum_{\by} \BRA{\bzy^l-\bzy}\mu^T\cdot\Ind{\Wyx \geq \bzy} \Ind{\mu^T \cdot \Ind{\Wyx \geq \bzy} < 0}\\
\end{align*}
where (a) follows from:
\begin{align}\label{MinEquivalent}
  &\sum_{\bx}\BRA{\min\BRA{\Wyx, \bzy^l}-\min\BRA{\Wyx, \bzy}} \\
  &= \sum_{\bx:\Wyx > \bzy^l}\bzy^l+\sum_{\bx:\Wyx \leq \bzy^l}\Wyx-\sum_{\bx:\Wyx \geq \bzy}\Wyx-\sum_{\bx:\Wyx < \bzy}\bzy \notag\\
  &=\BRA{\bzy^l-\bzy}\mu^T\cdot\Ind{\bx: \Wyx \geq \bzy} \notag
\end{align}
 since $\Ind{\bx:\Wyx > \bzy^l} = \Ind{\bx:\Wyx \geq \bzy}$ and also $\Ind{\bx:\Wyx \leq \bzy^l} = \Ind{\bx:\Wyx < \bzy}$.
To sum until here:
\begin{equation}\label{OptForMu}
  \eta(\mu,\bz^{\mu}) = \eta(\mu,\bz) - \sum_{\by} \BRA{\bzy-\bzy^l}\mu^T\cdot\Ind{\Wyx \geq \bzy} \Ind{\mu^T \cdot \Ind{\Wyx \geq \bzy} < 0}
\end{equation}

And we want to optimize $\eta(\mu,\bz^{\mu})$ with respect to $\mu$.
\else
It can be shown (details in the full paper \cite{ElkayamF15OnTheCalc}) that:
\begin{align}\label{OptForMu}
  &\eta(\mu,\bz^{\mu}) = \eta(\mu,\bz) \\
  &- \sum_{\by} \BRA{\bzy-\bzy^l}\mu^T\cdot\Ind{\Wyx \geq \bzy} \Ind{\mu^T \cdot \Ind{\Wyx \geq \bzy} < 0} \notag
\end{align}
Now we want to find $\mu$ such that $\eta(\mu,\bz^{\mu}) < 0$ to have a strict improvement of the score.
\fi

\subsection{Optimize for $\mu$}

Let define:
\begin{itemize}
  \item $b(\bx) = \sum_{\by}\min\BRA{W_{Y|X}(\by|\bx), \bzy} $ so that: $\eta(\mu,\bz) = \mu^T\cdot b$
  \item $a_y = \Ind{\bx: \Wyx \geq \bzy}$
  \item $\alpha_y = \bzy-\bzy^l \geq 0$
\end{itemize}
Then:
\begin{equation}\label{OptForMuAbstractForm}
  \eta(\mu,\bz^{\mu}) = \eta(\mu)= \mu^T\cdot\BRA{b-\sum_{\by} \alpha_y a_y \Ind{\mu^T \cdot a_y < 0}}
\end{equation}

\ifFullProofs
In appendix \ref{App:FarkasLemma}
\else
In the full paper \cite{ElkayamF15OnTheCalc}
\fi
we prove the following two lemmas. The first shows how to translate the problem of minimizing $\eta(\mu)$ into a linear program. We provide these lemmas here using the notation used in this section. (\ie, index the vectors with $y$)
\begin{lemma}\label{Lemma:EquivalentLP:Y}
Let:
\begin{equation}\label{Def:Eta}
  \eta(\mu) = \mu^T \cdot \BRA{b - \sum_y \alpha_y a_y \Ind{\mu^T \cdot a_y < 0} }
\end{equation}
Then minimization of $\eta(\mu)$ subject to $\mu^T\cdot\ones=0$ is equivalent to the following linear program:
\begin{equation}\label{EquivalentLP}
  \mbox{min}\ColVec{\mu}{z}^T \cdot \ColVec{b}{\alpha} \mbox{s.t.} \ColVec{\mu}{z}^T \cdot \BRA{ \begin{array}{cc} A & 0 \\ I & I \\ \end{array}} \geq 0, \mu^T\cdot\ones=0
\end{equation}
where $A$ is the matrix with columns $a_y$, $\alpha$ is a vector with entries $\alpha_y$, and $\ones$ is the all-one vector.
\end{lemma}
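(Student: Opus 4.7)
The plan is to recognize $\eta(\mu)$ as a piecewise-linear objective whose only nonlinearity is of ``positive-part'' type, and to linearize it by introducing one nonnegative slack variable $z_y$ per output symbol $\by$. First I would exploit the crucial sign condition $\alpha_y = \bzy - \bzy^l \geq 0$, which is guaranteed by the Phase II construction, to rewrite
\begin{equation*}
  \eta(\mu) \;=\; \mu^T b \;-\; \sum_{\by} \alpha_y \BRA{\mu^T a_y}\Ind{\mu^T a_y < 0} \;=\; \mu^T b \;+\; \sum_{\by} \alpha_y \max\BRA{0,\,-\mu^T a_y}.
\end{equation*}

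Next I would apply the standard epigraph trick: for any scalar $t$, $\max(0,-t) = \min\BRAs{z : z \geq 0,\, z \geq -t}$. Applying this componentwise with $t_y = \mu^T a_y$ and one slack $z_y$ per $\by$ gives
\begin{equation*}
  \eta(\mu) \;=\; \min_{\substack{z \geq 0 \\ z_y + \mu^T a_y \geq 0 \,\forall \by}} \BRA{\mu^T b + \sum_{\by} \alpha_y z_y}.
\end{equation*}
The equality holds \emph{precisely because} each $\alpha_y \geq 0$, which forces the inner minimization to pin $z_y$ down to the required value $\max(0,-\mu^T a_y)$ rather than leaving extra slack.

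From here the equivalence is immediate. Minimizing $\eta(\mu)$ over $\mu$ subject to $\mu^T\cdot\ones=0$ coincides with the joint minimization of $\mu^T b + z^T \alpha$ over $(\mu,z)$ subject to $\mu^T\cdot\ones=0$, $z \geq 0$, and $\mu^T a_y + z_y \geq 0$ for every $\by$. Packing these last two families of $|\cY|$ inequalities into a single block form $\ColVec{\mu}{z}^T \cdot \BRA{\begin{array}{cc} A & 0 \\ I & I \\ \end{array}} \geq 0$, with $A$ the matrix whose columns are the vectors $a_y$, and writing the objective in the matching stacked form $\ColVec{\mu}{z}^T \cdot \ColVec{b}{\alpha}$, recovers exactly the LP \eqref{EquivalentLP} stated in the lemma, with identical optimal value and identical optimizing $\mu$.

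There is no substantive obstacle; the only point worth flagging explicitly is the dependence on $\alpha_y \geq 0$, since without it the epigraph reformulation would turn into a maximization over the slacks and the problem would lose its LP structure. This is precisely why Phase II of the algorithm is designed to produce $\bz^l \leq \bz$, and I would highlight this connection so the reader sees that Phases II and III are not ad hoc manipulations but are exactly what is needed to make Lemma \ref{Lemma:EquivalentLP:Y} applicable.
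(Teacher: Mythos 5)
Your proof is correct and is essentially the paper's own argument: the paper's proof of the (slightly more general) Lemma~\ref{Lemma:EquivalentLP} likewise sets the slack to $z_{opt}(\mu)=\max\BRA{0,-\mu^T\cdot A}$, checks feasibility and equality of objectives, and uses $\alpha\geq 0$ to show any other feasible $z$ can only increase the objective — exactly your epigraph/slack-variable linearization, specialized to the constraint $\mu^T\cdot\ones=0$.
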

The next lemma provides necessary and sufficient conditions for $\mu=0$ to be the optimal minimizer of $\eta(\mu)$.
\begin{lemma}[Generalized Farkas]\label{Lemma:GeneralFarkas:Y}
    Let $a_y \in \mathbb{R}^n, y\in\cY$, $b \in \mathbb{R}^{|\cX|}$ and $\alpha_y \geq 0$.
    Then
\begin{equation}\label{Cond1}
  \mu^T \cdot \BRA{b - \sum_y \alpha_y a_y \Ind{\mu^T \cdot a_y < 0} } \geq 0
\end{equation}
   for all $\mu \in \mathbb{R}^{|\cX|}$ such that $\mu^T\cdot\ones=0$ if and only if:
\begin{equation}\label{Lemma:OptimalCondGeneralFarkasWithYs}
  b = \sum_{j} \lambda_y a_y+\tau\ones, 0 \leq \lambda_y \leq \alpha_y, \tau\in\mathbb{R}
\end{equation}
\end{lemma}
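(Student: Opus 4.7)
The plan is to prove the two directions separately. The ``if'' direction is by direct substitution. Suppose $b = \sum_y \lambda_y a_y + \tau\ones$ with $0 \leq \lambda_y \leq \alpha_y$. I would plug this expression into (\ref{Cond1}) and use the constraint $\mu^T \cdot \ones = 0$ to kill the $\tau\ones$ contribution, leaving a sum over $y$ of terms of the form $\lambda_y (\mu^T a_y) - \alpha_y (\mu^T a_y)\Ind{\mu^T a_y < 0}$. A case split on the sign of $\mu^T\cdot a_y$ then shows each term is nonnegative: when $\mu^T a_y \geq 0$ only $\lambda_y (\mu^T a_y) \geq 0$ survives, and when $\mu^T a_y < 0$ what remains is $(\lambda_y - \alpha_y)(\mu^T a_y) \geq 0$ since both factors are nonpositive.

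For the ``only if'' direction the strategy is to invoke LP duality through Lemma \ref{Lemma:EquivalentLP:Y}. That lemma says the condition (\ref{Cond1}) for every $\mu$ with $\mu^T\cdot\ones=0$ is equivalent to asserting that the linear program (\ref{EquivalentLP}) has optimal value at least $0$. Since $(\mu,z)=(0,0)$ is always primal-feasible with objective value $0$, the LP is bounded and attains $0$. I would then write down its dual: assign nonnegative multipliers $\lambda_y$ to the first block of inequalities $\mu^T a_y + z_y \geq 0$, nonnegative multipliers $\nu_y$ to the second block $z_y \geq 0$, and a free multiplier $\tau$ to the equality $\mu^T\cdot\ones=0$. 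The stationarity condition for the free variable $\mu$ yields $b = \sum_y \lambda_y a_y + \tau\ones$, while stationarity for $z$ yields $\lambda_y + \nu_y = \alpha_y$. Eliminating $\nu_y\geq 0$ gives $\lambda_y \leq \alpha_y$, which combined with $\lambda_y \geq 0$ is precisely the condition (\ref{Lemma:OptimalCondGeneralFarkasWithYs}). Strong LP duality then provides the desired equivalence, because the dual objective is identically zero so ``dual feasible'' and ``primal optimum $=0$'' coincide.

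The main obstacle is the careful setup of the dual and the verification that no pathology spoils strong duality: $\mu$ is a free variable (producing equality dual constraints rather than inequalities), the second inequality block $z_y \geq 0$ is needed so that the reformulation in Lemma \ref{Lemma:EquivalentLP:Y} faithfully encodes the indicator in $\eta$, and one must observe that the dual objective is constantly $0$ so that strong duality reduces to a feasibility statement. These checks are routine once the primal is placed in standard form, and primal feasibility with finite optimum rules out any unboundedness or infeasibility of the dual.
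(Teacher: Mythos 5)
Your proposal is correct and follows essentially the same route as the paper: both reduce the problem through Lemma \ref{Lemma:EquivalentLP:Y} to the homogeneous linear program \eqref{EquivalentLP} and then extract the representation \eqref{Lemma:OptimalCondGeneralFarkasWithYs} from it, the paper by applying the classical Farkas lemma (Lemma \ref{Lemma:Farkas}) to the implication ``feasible $\Rightarrow$ objective $\geq 0$'' (with the equality $\mu^T\cdot\ones=0$ split into two inequalities, which is exactly what makes $\tau$ free), while you invoke strong LP duality, an equivalent device. You also spell out the easy ``if'' direction by direct substitution, which the paper's appendix states only implicitly; that is a harmless (indeed welcome) addition rather than a deviation.
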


If $\eta(\mu) < 0$ then we have found an improvement of the score and we can keep on going to find a new locally optimal solution.

\subsection{The case where $\min_{\mu}\eta(\mu) = 0$}\label{SubSec:LocalImprovmentOptimal}
If $\eta(\mu) = 0$ is the minimal value, then we cannot improve on the current solution using perturbation that consider non-zeros elements of $Q_X(\bx)$. (The case where there are zeros in $Q_X(\bx)$ is discussed in subsection \ref{Sec:Zeros}).

Let us show that indeed in this case we reached the optimal solution, i.e., we can recover the conditions \eqref{Formula:3} and \eqref{Formula:4}.

Define $\bz^{o}$ by: $\bz^{o}_y = \bzy-\lambda_y$. Then:
\begin{enumerate}
  \item $\bzy^l \leq \bz^{o}_y \leq \bzy$
  \item $b^o(\bx) = \sum_{\by}\min\BRA{W_{Y|X}(\by|\bx), \bzy^o} = \tau$, \ie\  $b^o = \tau\ones$
\end{enumerate}

\ifFullProofs
The last equality follows from:
\begin{align*}
  \mu^T\cdot\BRA{b^o-b} &= \sum_{\by} \sum_{\bx} \mu(\bx)\BRA{\min\BRA{W_{Y|X}(\by|\bx), \bzy^o}-\min\BRA{W_{Y|X}(\by|\bx), \bzy}} \\
  &\overset{(a)}{=} \sum_{\by} \BRA{\bzy^o-\bzy}\mu^T\cdot\Ind{\bx: \Wyx \geq \bzy} \\
  &= -\sum_{\by}\lambda_y \mu^T\cdot a_y\\
\end{align*}
where (a) follows from the same reasoning as \eqref{MinEquivalent}.
Hence:
$$ b^o=b-\sum_{\by} \lambda_y a_y =\tau e$$
\else
The details appears in the full paper \cite{ElkayamF15OnTheCalc}.
\fi


\subsection{Zeros in $Q_X(\bx)$}\label{Sec:Zeros}
\ifFullProofs

Let $Q_X,\bz$ be such that:
$$ Q_X\BRAs{\bx: \Wyx > \bzy} \leq e^{-R} \leq Q_X\BRAs{\bx: \Wyx \geq \bzy}$$
$$ \epsilon = \sum_{\by} \min\BRA{W_{Y|X}(\by|\bx),\bzy} - e^{-R}\sum_{\by} \bzy $$ for all $\bx$ with $Q_X(\bx) > 0$, and:
$$ \epsilon >    \sum_{\by} \min\BRA{W_{Y|X}(\by|\bx_1),\bzy} - e^{-R}\sum_{\by} \bzy $$ for some $\bx_1$ with $Q_X(\bx_1) = 0$.
We also assume that $Q_X$ is locally optimal, which means that we cannot improve the score by running a local linear program. Obviously, we cannot argue that the optimality condition \eqref{Formula:4} holds.

For any perturbation with $\mu(\bx_1) > 0$, we must have that at least one of the linear inequality constraints is violated. Equivalently, we can say:
For any perturbation that does not violate the linear inequality constraint, we must have $\mu(\bx_1) \leq 0$.

\begin{itemize}
  \item From $Q_X^{\mu}\BRAs{\bx: \Wyx > \bzy} = Q_X\BRAs{\bx: \Wyx > \bzy} + \mu^T\cdot \Ind{\Wyx > \bzy} $, If $Q_X\BRAs{\bx: \Wyx > \bzy}=e^{-R}$ then in order not to violate the linear inequality we must have: $\mu^T\cdot \Ind{\Wyx > \bzy} \leq 0$
  \item From $Q_X^{\mu}\BRAs{\bx: \Wyx \geq \bzy} = Q_X\BRAs{\bx: \Wyx \geq \bzy} + \mu^T\cdot \Ind{\Wyx \geq \bzy} $, If $Q_X\BRAs{\bx: \Wyx \geq \bzy}=e^{-R}$ then in order not to violate the linear inequality we must have: $\mu^T\cdot \Ind{\Wyx \geq \bzy} \geq 0$
  \item $\mu$ must satisfy: $\mu^T\cdot \ones = 0$.
\end{itemize}
By Farkas lemma \eqref{Lemma:Farkas} we must have:
$$ \delta_{\bx_1} = \sum_{\by:Q_X\BRAs{\bx: \Wyx > \bzy}=e^{-R}} \lambda^l_y \Ind{\Wyx > \bzy} + \sum_{\by:\by:Q_X\BRAs{\bx: \Wyx \geq \bzy}=e^{-R}} \lambda^h_y \Ind{\Wyx \geq \bzy} + \alpha\ones $$
with $\lambda^l_y \geq 0$ and $\lambda^h_y \leq 0$ and $\delta_{\bx_1}$ is the vector with 1 at $\bx_1$ and 0 otherwise.

At this point we can use these $\lambda$s by adding them to $\bzy$ in order to increase score at $\bx_1$ up to the other scores and meet the conditions \eqref{Formula:4} along the same lines as \ref{SubSec:LocalImprovmentOptimal}. Note that we might not be able to do this in a single step. Moreover, we have to do this process for each variable with $Q_X(\bx) = 0$ and lower score than the global score we have.

\else
Let $Q_X,\bz$ be such that:
$$ Q_X\BRAs{\bx: \Wyx > \bzy} \leq e^{-R} \leq Q_X\BRAs{\bx: \Wyx \geq \bzy}$$
$$ \epsilon = \sum_{\by} \min\BRA{W_{Y|X}(\by|\bx),\bzy} - e^{-R}\sum_{\by} \bzy $$ for all $\bx$ with $Q_X(\bx) > 0$, and:
$$ \epsilon >    \sum_{\by} \min\BRA{W_{Y|X}(\by|\bx_1),\bzy} - e^{-R}\sum_{\by} \bzy $$ for some $\bx_1$ with $Q_X(\bx_1) = 0$. We also assume that $Q_X$ is locally optimal, which means that we cannot improve the score by running a local linear program. Obviously, we can't argue that the optimality condition \eqref{Formula:4} holds.

For any perturbation with $\mu(\bx_1) > 0$, we must have that at least one of the linear inequality constraints is violated. This can be translated into a linear program that is bounded at 0 as we cannot improve on the score locally. Applying Farkas lemma to this linear program we can write $\delta_{\bx_1}$\footnote{$\delta_{\bx_1}$ is the vector with 1 at $\bx_1$ and 0 otherwise.} as a linear combination of the other linear constraints which in turn can be used to ``correct'' the outlier score at $\bx_1$ and to satisfy the sufficient condition for global optimality.

The full details of this procedure appear in \cite{ElkayamF15OnTheCalc}

\fi

\section{Summary}
In this paper we have studied the functional properties of the minimax-converse for a fixed rate. The existence of a saddle point was proved, necessary and sufficient conditions were derived and an algorithm for the computation of the saddle point was presented. For the DMC case, the algorithm can be modified to incorporate additional linear constraints (\ie, input and output distribution that are uniform on types) and this results in a polynomial time algorithm for the computation of the saddle point. The saddle point distribution can be used to optimize the random coding argument (\eg, \cite{ElkayamITW2015}).

\appendices


\ifFullProofs

\section{Proof of lemma \ref{Lemma:BinaryHyp}}\label{App:BinaryHypLemma}

\subsection{Proof of \eqref{Formula:Beta}}
  Let $\lambda, \delta$ be the thresholds for the optimal test, and let:
  $$A = \BRAs{w:\frac{Q(w)}{P(w)} < \lambda} $$
  $$B = \BRAs{w:\frac{Q(w)}{P(w)} = \lambda} $$
  Then:
  \begin{equation}\label{Def:Alpha}
    \alpha = P(A)+\delta P(B)
  \end{equation}
  And:
  \begin{equation}\label{Def:Beta}
    \beta = Q(A)+\delta Q(B)
  \end{equation}
  Multiply \eqref{Def:Alpha} by $\lambda$, subtract \eqref{Def:Beta} and use $Q(B) = \lambda P(B)$:
  $$ \beta-\lambda\alpha = Q(A)-\lambda P(A)$$

  On the other hand:
  \begin{align*}
    \sum_{w\in W} \min\BRA{Q(w),\lambda P(w)} &= \sum_{w \in A}Q(w) + \sum_{w \in A^c}\lambda P(w)\\
    &= Q(A) + \lambda (1-P(A)) \\
    &= Q(A) - \lambda P(A)+\lambda \\
    &=\beta-\lambda\alpha+\lambda\\
  \end{align*}
  Thus:
  $$ \beta = \sum_{w\in W} \min\BRA{Q(w),\lambda P(w)} -\lambda(1-\alpha)$$

\subsection{proof of the sup formula (smaller $\lambda$)}

Note that the optimal $\lambda$ satisfies the following:
\begin{equation}\label{OptimalLambda}
  P\BRAs{w:\frac{Q(w)}{P(w)} \geq \lambda} \geq 1-\alpha \geq P\BRAs{w:\frac{Q(w)}{P(w)} > \lambda}
\end{equation}
  Let $\lambda_1 < \lambda$:
\begin{align*}
  & \sum_{w\in W} \min\BRA{Q(w),\lambda_1 P(w)} - \sum_{w\in W} \min\BRA{Q(w),\lambda P(w)} \\
         &= \sum_{w\in W:\lambda_1 P(w) < Q(w) < \lambda P(w)} \BRA{\lambda_1 P(w) - Q(w)}+\BRA{\lambda_1  - \lambda }\sum_{w\in W:\lambda P(w) \leq Q(w) } P(w) \\
         &\overset{(a)}{\leq} \BRA{\lambda_1  - \lambda }\sum_{w\in W:\lambda P(w) \leq Q(w) } P(w) \\
         &= \BRA{\lambda_1  - \lambda }P\BRAs{w: \frac{Q(w)}{P(w)} \geq \lambda } \\
         &\overset{(b)}{\leq} \BRA{\lambda_1  - \lambda }(1-\alpha)
\end{align*}
where (a) follow from: $\lambda_1 P(w) - Q(w)< 0$, (b) follow from $\lambda_1  - \lambda < 0$ and $P\BRAs{w: \frac{Q(w)}{P(w)} \geq \lambda } \geq 1-\alpha$.
Rearranging the terms:
\begin{equation*}
  \sum_{w\in W} \min\BRA{Q(w),\lambda_1 P(w)} -\lambda_1(1-\alpha)\leq \sum_{w\in W} \min\BRA{Q(w),\lambda P(w)}-\lambda(1-\alpha)
\end{equation*}
If $\lambda_1$ does not satisfy the condition \eqref{Formula:OptimalLambda}, then:
\begin{itemize}
  \item If $P\BRAs{w:\frac{Q(w)}{P(w)} \leq \lambda_1} < P\BRAs{w:\frac{Q(w)}{P(w)} < \lambda}$, then we are finished because there exist $w_0$ with $P(w_0) > 0$, $\frac{Q(w_0)}{P(w_0)} < \lambda$, and $\frac{Q(w_0)}{P(w_0)} > \lambda_1$, which gives strict inequality in (a) above.
  \item If $P\BRAs{w:\frac{Q(w)}{P(w)} \leq \lambda_1} = P\BRAs{w:\frac{Q(w)}{P(w)} < \lambda}$ then $P\BRAs{w:\frac{Q(w)}{P(w)} \leq \lambda_1} < \alpha$ and we have strict inequality $P\BRAs{w:\frac{Q(w)}{P(w)} < \lambda} < \alpha$, which leads to a strict inequality in (b) above.
\end{itemize}

\subsection{Proof of the sup formula (greater $\lambda$)}
For $\lambda_1 > \lambda$ we have:

\begin{align*}
  \sum_{w\in W} \min\BRA{Q(w),\lambda_1 P(w)} &=    Q\BRAs{w:Q(w) < \lambda P(w)}+Q\BRAs{w:\lambda P(w) \leq Q(w) \leq \lambda_1 P(w)}+\lambda_1 P\BRAs{w:Q(w) > \lambda_1 P(w)} \\
                                              &\overset{(a)}{\leq} Q\BRAs{w:Q(w) < \lambda P(w)}+\lambda_1 P\BRAs{w:\lambda P(w) \leq Q(w) \leq \lambda_1 P(w)}+\lambda_1 P\BRAs{w:Q(w) > \lambda_1 P(w)} \\
                                              &=    Q\BRAs{w:Q(w) < \lambda P(w)}+\lambda_1 P\BRAs{w:Q(w) \geq \lambda P(w)}
\end{align*}
where (a) follow upper bounding $Q(w)$ with $\lambda_1 P(w)$.
\begin{align*}
  & \sum_{w\in W} \min\BRA{Q(w),\lambda_1 P(w)} - \sum_{w\in W} \min\BRA{Q(w),\lambda P(w)} \\
                              &\leq Q\BRAs{w:Q(w) < \lambda P(w)}+\lambda_1 P\BRAs{w:Q(w) \geq \lambda P(w)}-Q\BRAs{w:Q(w) < \lambda P(w)}-\lambda P\BRAs{w:Q(w) \geq \lambda P(w)} \\
                              &= (\lambda_1-\lambda) P\BRAs{w:Q(w) \geq \lambda P(w)} \\
                              &\leq \BRA{\lambda_1-\lambda} (1-\alpha)
\end{align*}
Since $\lambda_1-\lambda > 0$ and $P\BRAs{w:Q(w) \geq \lambda P(w)} \leq 1-\alpha$, we have:
\begin{equation*}
  \sum_{w\in W} \min\BRA{Q(w),\lambda_1 P(w)} -\lambda_1(1-\alpha) \leq \sum_{w\in W} \min\BRA{Q(w),\lambda P(w)}-\lambda(1-\alpha)
\end{equation*}

If $\lambda_1$ does not satisfy the condition \eqref{Formula:OptimalLambda}, then $P\BRAs{w:\frac{Q(w)}{P(w)} < \lambda} < P\BRAs{w:\frac{Q(w)}{P(w)} < \lambda_1}$ and we are finished because there exist $w_0$ with $P(w_0) > 0$, $\frac{Q(w_0)}{P(w_0)} \geq \lambda$, and $\frac{Q(w_0)}{P(w_0)} < \lambda_1$, which gives strict inequality in (a) above.

\else
\fi


\ifFullProofs

\section{Generalized Farkas Lemma } \label{App:FarkasLemma}
\begin{lemma}[Farkas]\label{Lemma:Farkas}
  Let $a_i \in \mathbb{R}^n, i=1,...,m$ and  $b \in \mathbb{R}^n$. If for all $\mu \in \mathbb{R}^n$ such that $\mu^T \cdot a_i \geq 0$ implies $\mu^T \cdot b \geq 0$, then $b = \sum_{j} \lambda_j a_j$ with $\lambda_j \geq 0$.
\end{lemma}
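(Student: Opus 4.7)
The plan is to prove Farkas' lemma by the standard contrapositive/separation argument. Let $C = \{\sum_{j=1}^m \lambda_j a_j : \lambda_j \geq 0\}$ denote the convex cone generated by $a_1,\ldots,a_m$. The claim we need to establish is equivalent to: if $b \notin C$, then there exists $\mu \in \mathbb{R}^n$ such that $\mu^T a_i \geq 0$ for every $i$ but $\mu^T b < 0$. This is the form I will aim for.

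First I would argue that $C$ is a closed convex subset of $\mathbb{R}^n$. Closedness under sums and nonnegative scaling is obvious, so convexity is free; the nontrivial point is closedness. The standard way is to invoke (or prove by induction on $m$) the fact that a finitely generated convex cone is closed. One clean route: by a Carath\'eodory-type argument, any element of $C$ can already be written using at most $n$ of the $a_j$'s with nonnegative coefficients, so $C$ is the finite union of the images of the continuous maps $(\lambda_{j_1},\ldots,\lambda_{j_n}) \mapsto \sum_k \lambda_{j_k} a_{j_k}$ from the closed cone $\mathbb{R}_{\geq 0}^n$ restricted to linearly independent subfamilies, and each such image is closed. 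I expect this closedness step to be the main technical obstacle; everything after it is routine.

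Next, assuming $b \notin C$, I would apply the separating hyperplane theorem to the closed convex set $C$ and the point $b$. This yields a nonzero $\mu \in \mathbb{R}^n$ and a scalar $c$ with $\mu^T b < c \leq \mu^T x$ for all $x \in C$. Since $0 \in C$, we get $c \leq 0$, hence $\mu^T b < 0$. Moreover, since $C$ is a cone, for each fixed $i$ and every $t > 0$ we have $t a_i \in C$, so $t\, \mu^T a_i \geq c$ for all $t > 0$. Letting $t \to \infty$ forces $\mu^T a_i \geq 0$. Thus $\mu$ satisfies $\mu^T a_i \geq 0$ for all $i$ while $\mu^T b < 0$, contradicting the hypothesis of the lemma.

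Therefore $b \in C$, which is exactly the conclusion $b = \sum_j \lambda_j a_j$ with $\lambda_j \geq 0$. The only delicate piece, as noted, is the closedness of the finitely generated cone $C$; the separation argument and the cone-scaling trick that turns a strict separation into the inequalities $\mu^T a_i \geq 0$ are standard and short.
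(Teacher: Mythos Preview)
Your proof is correct and is the standard separation-based argument for Farkas' lemma. However, there is nothing to compare against: the paper does not prove Lemma~\ref{Lemma:Farkas} at all. It simply states the classical Farkas lemma as a known result and then uses it as a black box in the proof of the Generalized Farkas lemma (Lemma~\ref{Lemma:GeneralFarkas}). So your contribution here is supplying a proof the paper omits rather than reproducing or departing from one the paper gives.

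On the substance of your argument: the contrapositive reduction, the separation step, and the cone-scaling trick to upgrade the separating inequality to $\mu^T a_i \geq 0$ are all fine. Your sketch of closedness via Carath\'eodory for cones is also valid: every point of $C$ is a nonnegative combination of some linearly independent subfamily of the $a_j$, each such subfamily generates a closed cone (the generating map is a linear isomorphism onto a subspace, hence a homeomorphism onto its image), and there are only finitely many such subfamilies, so $C$ is a finite union of closed sets. That is exactly the delicate point you flagged, and your outline handles it correctly.
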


We need to prove the following generalization of this result:
\begin{lemma}[Generalized Farkas]\label{Lemma:GeneralFarkas}
    Let $a_i \in \mathbb{R}^n, i=1,...,m$, $b \in \mathbb{R}^n$ and $\alpha_j \geq 0$.
    Assume that for all $\mu \in \mathbb{R}^n$ such that $\mu^T\cdot C \geq 0$,
\begin{equation}\label{Cond1}
  \mu^T \cdot \BRA{b - \sum_j \alpha_j a_j \Ind{\mu^T \cdot a_j < 0} } \geq 0
\end{equation}
   Then:  $b = \sum_{j} \lambda_j a_j + C\cdot \tau$ with $0 \leq \lambda_j \leq \alpha_j$.
\end{lemma}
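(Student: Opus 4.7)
The plan is to rewrite the (apparently nonlinear) hypothesis as the optimality statement of a standard linear program, and then to read the required representation of $b$ directly off LP duality. Equivalently, this amounts to applying the classical Farkas Lemma \ref{Lemma:Farkas} to a suitably augmented system in the enlarged variable $(\mu,z)$.

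The first step is to linearize the expression $\mu^T \BRA{ b - \sum_j \alpha_j a_j \Ind{\mu^T a_j < 0} }$. Using $\alpha_j \geq 0$ and the elementary identity $\alpha_j \max(-t, 0) = \min\BRAs{\alpha_j z : z \geq 0,\ z + t \geq 0}$ with $t = \mu^T a_j$, one obtains
\begin{equation*}
  \mu^T \BRA{ b - \sum_j \alpha_j a_j \Ind{\mu^T a_j < 0} } \;=\; \min_{z \geq 0,\; z + A^T \mu \geq 0} \BRA{ b^T \mu + \alpha^T z },
\end{equation*}
where $A$ is the matrix whose columns are the $a_j$. Hence the hypothesis is equivalent to the assertion that the linear program
\begin{equation*}
  \min_{\mu,\,z}\; b^T \mu + \alpha^T z \quad \text{subject to} \quad C^T \mu \geq 0,\; A^T \mu + z \geq 0,\; z \geq 0
\end{equation*}
has optimal value $\geq 0$. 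Since $(\mu,z) = (0,0)$ is feasible with objective $0$, this optimum is in fact exactly $0$, and in particular finite.

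The key step is then LP duality. Introducing nonnegative dual variables $\tau$, $\lambda$, $\nu$ for the three blocks of primal constraints respectively, the dual objective is identically $0$, and the dual feasibility conditions (stationarity in the free variables $\mu$ and $z$) are
\begin{equation*}
  b = A \lambda + C \tau, \qquad \alpha = \lambda + \nu.
\end{equation*}
Strong duality applies because the primal is feasible with finite value, so a dual-feasible triple $(\lambda, \nu, \tau)$ exists. From $\alpha = \lambda + \nu$ and $\nu \geq 0$ one gets $\lambda_j \leq \alpha_j$, and combined with $\lambda \geq 0$ this yields $0 \leq \lambda_j \leq \alpha_j$. Rewriting $b = A\lambda + C\tau$ componentwise produces the desired representation $b = \sum_j \lambda_j a_j + C \tau$.

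The main (and essentially only) obstacle is making sure the linearization step is faithful: one must check that, for every feasible $\mu$, the inner minimization over $z$ is attained at $z_j = \max(-\mu^T a_j, 0)$ and evaluates to $-\alpha_j (\mu^T a_j) \Ind{\mu^T a_j < 0}$, which is where $\alpha_j \geq 0$ enters essentially. Once this bookkeeping is in place, the rest is a direct application of LP strong duality (or, equivalently, the classical Farkas Lemma) to the augmented LP.
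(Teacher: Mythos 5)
Your proposal is correct and follows essentially the same route as the paper: it linearizes the indicator term via the auxiliary variable $z$ (exactly the paper's Lemma \ref{Lemma:EquivalentLP}, where $z_{opt}(\mu)=\max(0,-\mu^T A)$ plays the role of your inner minimization) and then extracts the representation $b=\sum_j\lambda_j a_j+C\tau$, $0\leq\lambda_j\leq\alpha_j$, from the augmented LP. The only cosmetic difference is that you invoke LP strong duality where the paper applies the classical Farkas Lemma to the resulting implication, which, as you note, is an equivalent formulation.
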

We defer the proof of the lemma after proving the following:
\begin{lemma}\label{Lemma:EquivalentLP}
Let:
\begin{equation}\label{Def:Eta}
  \eta(\mu) = \mu^T \cdot \BRA{b - \sum_j \alpha_j a_j \Ind{\mu^T \cdot a_j < 0} }
\end{equation}
Then minimization of $\eta(\mu)$ such that $\mu^T\cdot C \geq 0$, is equivalent to the following linear program:
\begin{equation}\label{EquivalentLP}
  \mbox{min}\ColVec{\mu}{z}^T \cdot \ColVec{b}{\alpha} \mbox{s.t.} \ColVec{\mu}{z}^T \cdot \LPMat \geq 0
\end{equation}
where $A$ is the matrix with columns $a_i$, $\lambda$ is a vector with entries $\lambda_i$, and $\alpha$ is a vector with entries $\alpha_i$.
\end{lemma}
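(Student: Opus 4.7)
The plan is to eliminate the auxiliary variable $z$ from the LP analytically, showing that for each fixed $\mu$ the $z$-minimum recovers precisely the non-smooth penalty $-\sum_j \alpha_j \mu^T a_j \Ind{\mu^T a_j < 0}$ that appears in $\eta(\mu)$. The argument hinges on the hypothesis $\alpha_j \geq 0$, which makes the LP objective monotone non-decreasing in each $z_j$ and hence pushes it to its lower bound.

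First I would decode the block constraint. Writing $M = \LPMat$, the inequality $\ColVec{\mu}{z}^T \cdot M \geq 0$ splits into three groups: (i) $\mu^T a_j + z_j \geq 0$ for each column $a_j$ of $A$, (ii) $z_j \geq 0$, and (iii) $\mu^T C \geq 0$. Group (iii) is exactly the feasibility condition of the original problem, so the remaining content lies in (i)--(ii).

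Next I would minimize out $z$ for a fixed feasible $\mu$. Combining (i) and (ii) gives $z_j \geq \max(0, -\mu^T a_j)$. Since $\alpha_j \geq 0$, the partial objective $\sum_j \alpha_j z_j$ is minimized by setting
\begin{equation*}
z_j^{\star}(\mu) = \max(0, -\mu^T a_j) = -\mu^T a_j \cdot \Ind{\mu^T a_j < 0}.
\end{equation*}
Substituting into the full LP objective yields
\begin{equation*}
\mu^T b + \sum_j \alpha_j z_j^{\star}(\mu) = \mu^T b - \sum_j \alpha_j \mu^T a_j \Ind{\mu^T a_j < 0} = \eta(\mu),
\end{equation*}
so at optimality in $z$ the LP value, viewed as a function of $\mu$, agrees with $\eta(\mu)$, and the feasible sets (in $\mu$) coincide. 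This gives equivalence of the two minimization problems: the optimal values are equal, and any minimizer $\mu^{\star}$ of $\eta$ lifts to an LP optimum via $\BRA{\mu^{\star}, z^{\star}(\mu^{\star})}$, while the $\mu$-component of any LP optimum is a minimizer of $\eta$.

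I do not anticipate a serious obstacle; the whole argument is the standard device of modelling a positive-part term $(\cdot)_+$ by a non-negative slack. The only subtlety worth flagging is that non-negativity of $\alpha_j$ is indispensable: without it the partial LP in $z$ would be unbounded below, and the elimination step would collapse, which is precisely why the hypothesis $\alpha_j \geq 0$ is placed in the statement.
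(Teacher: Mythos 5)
Your proof is correct and follows essentially the same route as the paper: you eliminate the slack $z$ by setting $z_j^{\star}(\mu)=\max\BRA{0,-\mu^T a_j}$, which is exactly the paper's $z_{opt}(\mu)$, and you use $\alpha_j\geq 0$ in the same place (to show any feasible $z\geq z^{\star}(\mu)$ cannot decrease the objective). No gaps; your remark on the indispensability of $\alpha_j\geq 0$ matches the paper's final step.
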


\begin{remark}
Obviously, 0 is an admissible solution. when there exists $\mu$ with $\eta(\mu) < 0$ then $\eta$ is not bounded from below since we can multiply the solution by any positive factor. When the solution is bounded from below then it must be 0.
\end{remark}
\begin{proof}
For each $\mu$ such that $\mu^T\cdot C \geq 0$, let $z_{opt}(\mu) = \max\BRA{0, -\mu^T\cdot A}$. Then:
$$ \ColVec{\mu}{z_{opt}(\mu)}^T \cdot \LPMat \geq 0 $$
and:
\begin{align*}
  \ColVec{\mu}{z_{opt}}^T \cdot \ColVec{b}{\alpha} &= \mu^T \cdot b -\sum_{j:\mu^T\cdot a_j < 0} \alpha_j \BRA{\mu^T\cdot a_j}\\
  &= \mu^T \cdot \BRA{b - \sum_j \alpha_j a_j \Ind{\mu^T \cdot a_j < 0} }
\end{align*}
On the other hand, if:
$$\ColVec{\mu}{z}^T \cdot \LPMat \geq 0$$
then: $$ z \geq z_{opt}(\mu)$$
and:
$$ \ColVec{\mu}{z}^T \cdot \ColVec{b}{\alpha} \geq \ColVec{\mu}{z_{opt}(\mu)}^T \cdot \ColVec{b}{\alpha} = \mu^T \cdot \BRA{b - \sum_j \alpha_j a_j \Ind{\mu^T \cdot a_j < 0} }$$
since $\alpha \geq 0$.
\end{proof}

\begin{proof}[Proof of Lemma \ref{Lemma:GeneralFarkas}]
By Lemma \ref{Lemma:EquivalentLP}, the problem is equivalent to the linear program \eqref{EquivalentLP}. If 0 is the minimal solution, then this is equivalent to:
\begin{equation}\label{Cond2}
  \ColVec{\mu}{z}^T \cdot \LPMat \geq 0 \Rightarrow  \ColVec{\mu}{z}^T \cdot \ColVec{b}{\alpha} \geq 0
\end{equation}
Now, the standard Farkas lemma \eqref{Lemma:Farkas}, this implies that there exists $\lambda = \TripleColVec{\lambda_1}{\lambda_2}{\tau} \geq 0$ such that:
$\LPMat \cdot \lambda = \ColVec{b}{\alpha}$
which is equivalent to: $A \cdot \lambda_1 + C\cdot \tau = b$ and $\lambda_1+\lambda_2 = \alpha$, which together give $0 \leq \lambda_1 \leq \alpha$ as needed.
\end{proof}

\begin{remark}
Note that we can add equality constraints on $\mu$ by adding two inequality constraints. For an equality constraint this results in an additional vector added to $b$ without any restriction on their coefficient. Specifically, in our case we have the additional constraint that $\sum_{\bx}\mu(\bx)=0$, which is equivalent to $\mu^T\cdot\ones = 0$, where $\ones$ is the vector of all ones. In Lemma \ref{Lemma:GeneralFarkas:Y} we obtain:
$$  b = \sum_{y} \lambda_y a_y + \tau\ones, \lambda_j \geq 0 $$
where we don't have restrictions on $\tau$ (using the notation there).
\end{remark}

\else
\fi

\ifFullProofs

\section{Modification for DMC } \label{App:DMC}
In this section we assume the reader is familiar with the \emph{method of types} \cite{csiszar1998method,csiszár2011information}.  We use standard type notation, \eg \cite{csiszár2011information}. Specifically, for a fixed $n$:
\begin{itemize}
  \item $\bx \in \cX^n$ and $\by \in \cY^n$
  \item $P_{\bx}$ denotes the empirical distribution of the sequence $\bx \in \cX^n$. $P_{\bx,\by}$ denotes the empirical distribution of the sequence $\BRA{\bx,\by} \in \BRA{\cX\times\cY}^n$
  \item $\Tx$ denotes the type class of the sequence $\bx$, \ie:
  $$T_{\bx} = \BRAs{\bx'\in \cX^n: P_{\bx'}=P_{\bx}}$$
  \item $T_{\bx|\by}$ is the conditional type class of $\bx$ given $\by$, \ie:
  $$T_{\bx|\by} = \BRAs{\bx'\in \cX^n: P_{\bx',\by}=P_{\bx,\by} }$$
  \item $|\cdot|$ denote the size of a set, \eg, $|\Tx|$,$|T_{\bx|\by}|$
\end{itemize}

For DMC, we know from \cite[Theorem 20]{polyanskiy2013saddle} that we can restrict both the input and output distributions, as $Q_X(\bx)$ and $Q_Y(\by)$ to be uniform on types. Using the same argument for $\gamma(Q_X,\bz)$, \ie, the convexity and concavity with respect to $Q_X$ and $\bz$ shows that we can also prove that $Q_X(\bx)$ and $\bz$ are uniform over type. In this appendix we provide the necessary modification for the algorithm needed. Specifically, for each input type class $\Tx$ let $\lambda_{\Tx}=Q_X\BRA{\Tx}$, \ie\ $\lambda_{\Tx}$ is the weight of the type class $\Tx$. We have:
$$ \sum_{\Tx} \lambda_{\Tx} = 1$$
and:
\begin{equation}\label{Def:QofLambda}
  Q_X(\bx) = \frac{\lambda_{\Tx}}{|\Tx|}
\end{equation}
where \eqref{Def:QofLambda} is by the uniform type assumption. We also assume that $\bzy$ is fixed for each $\by'\in T_{\by}$, \ie\ $\bzy = \bz_{\Ty}$.
The algorithm is modified to calculate the score using $\lambda_{\Tx}$ and $\bzy$ instead of $Q_X(\bx)$ and $\bzy$. The linear inequality and the score function has to be modified to incorporate $\lambda_{\Tx}$ and $\bz_{\Ty}$ instead of $Q_X(\bx)$ and $\bzy$.
For the linear inequality:
\begin{align*}
  Q_X\BRAs{\Wyx > \bzy} &= \sum_{\bx:\Wyx > \bzy} Q_X(\bx) \\
  &= \sum_{\bx:\Wyx > \bzy} \frac{\lambda_{\Tx}}{|T_x|} \\
  &\overset{(a)}{=} \sum_{T_{\bx|\by}:\Wyx > \bzy} \lambda_{\Tx}\frac{|T_{\bx|\by}|}{|T_x|} \\
\end{align*}
where in (a) we sum over the conditional type of $\bx$ given $\by$, which satisfies the condition. The condition with $\geq$ instead of $>$ is similar. The score function:
\begin{align*}
\gamma(Q_X,\bz) &= \sum_{\bx,\by} Q_X(\bx)\min\BRA{W_{Y|X}(\by|\bx),\bzy} - e^{-R}\sum_{\by} \bzy \\
&\overset{(b)}{=} \sum_{T_{\bx,\by}} |T_{\bx,\by}|Q_X(\bx)\min\BRA{W_{Y|X}(\by|\bx),\bzy} - e^{-R}\sum_{\Ty}|\Ty| \bzy \\
&= \sum_{T_{\bx,\by}} |T_{\bx,\by}|\frac{\lambda_{\Tx}}{|\Tx|}\min\BRA{W_{Y|X}(\by|\bx),\bzy} - e^{-R}\sum_{\Ty}|\Ty| \bzy \\
&\overset{(c)}{=} \sum_{T_{\bx,\by} } |T_{\by|\bx}|\lambda_{\Tx}\min\BRA{W_{Y|X}(\by|\bx),\bzy} - e^{-R}\sum_{\Ty}|\Ty| \bzy \\
\end{align*}
where (b) follows by summing over all $(\bx,\by)$ in the type class $T_{\bx,\by}$, since $\sum_{\bx} Q_X(\bx)\min\BRA{W_{Y|X}(\by|\bx),\bzy}$ is constant over the type class, and the same argument for the second sum (c) follows since $\frac{|T_{\bx,\by}|}{|\Tx|}=|T_{\by|\bx}|$.

A few comments are in order:
\begin{remark}
\mynewline
  \begin{itemize}
    \item Since we expect $e^{-R}$ to be small, this suggests that calculations should be done in the log domain. This is left for further research.
    \item It is well known that the linear programs are harder when degeneracy occurs. This follows in our case too; had we assumed that no degeneracy occurs, some simplifications are possible. However, since we are interested in small examples, simulation results show that degeneracy does occur and we have to handle these cases as well.
    \item Incremental algorithm starting from large $R$ for which the uniform distribution is optimal and reducing $R$ while keeping optimality of the distribution $Q_X$ through small correction to the distribution.
  \end{itemize}
\end{remark}

\else
\fi

\bibliographystyle{IEEEtran}
\bibliography{bib}

\end{document}